\title{A Process Calculus with\\ Finitary Comprehended Terms}
\author{J.A. Bergstra \and C.A. Middelburg}
\institute{Informatics Institute, Faculty of Science,
           University of Amsterdam, \\
           Science Park~904, 1098~XH Amsterdam, the Netherlands \\
           \email{J.A.Bergstra@uva.nl, C.A.Middelburg@uva.nl}}
\begin{document}

\maketitle

\begin{abstract}
We introduce the notion of an \ACP\ process algebra and the notion of a
meadow enriched \ACP\ process algebra.
The former notion originates from the models of the axiom system \ACP.
The latter notion is a simple generalization of the former notion to
processes in which data are involved, the mathematical structure of data
being a meadow.
Moreover, for all associative operators from the signature of meadow
enriched \ACP\ process algebras that are not of an auxiliary nature, we 
introduce variable-binding operators as generalizations.
These variable-binding operators, which give rise to comprehended terms,
have the property that they can always be eliminated.
Thus, we obtain a process calculus whose terms can be interpreted in all
meadow enriched \ACP\ process algebras.
Use of the variable-binding operators can have a major impact on the
size of terms.
\begin{keywords}
\ACP\ process algebra, meadow enriched \ACP\ process algebra,
variable-binding operator, comprehended term, process calculus.
\end{keywords}
\begin{classcode}
D.1.3, F.1.2, F.4.1.
\end{classcode}
\end{abstract}

\section{Introduction}
\label{sect-introduction}

In many formalisms proposed for the description and analysis of
processes in which data are involved, algebraic specifications of
the data types concerned have to be given over and over again.
This is also the case with the principal \ACP-based formalisms proposed
for the description and analysis of processes in which data are
involved, to wit $\mu$CRL~\cite{GP94a,GP95a} and PSF~\cite{MV90a}.
There is a mismatch between the process specification part and the data
specification part of these formalisms.
Firstly, there is a choice of one built-in type of processes, whereas
there is a choice of all types of data that can be specified
algebraically.
Secondly, the semantics of the data specification part is its initial
algebra in the case of PSF and its class of minimal Boolean preserving
algebras in the case of $\mu$CRL, whereas the semantics of the process
specification part is a model based on transition systems and
bisimulation equivalence.
Sticking to this mismatch, no lasting axiomatizations in the style of
\ACP\ has emerged for process algebras that have to do with processes
in which data are involved.

Our first main objective is to obtain a lasting axiomatization in the
style of \ACP\ for process algebras that have to do with processes in
which data are involved.
To achieve this objective, we first introduce the notion of an \ACP\
process algebra and then the notion of a meadow enriched \ACP\ process
algebra.

\ACP\ process algebras are essentially models of the axiom system \ACP.
Meadow enriched \ACP\ process algebras are data enriched \ACP\ process
algebras in which the mathematical structure for data is a meadow.
Meadows were defined for the first time in~\cite{BT07a}.
The prime example of a meadow is the rational number field with the
multiplicative inverse operation made total by imposing that the
multiplicative inverse of zero is zero.
Although the notion of a meadow enriched \ACP\ process algebra is a
simple generalization of the notion of an \ACP\ process algebra, it is
an interesting one: there is a multitude of finite and infinite meadows
and meadows obviate the need for Boolean values and operations on data
that yield Boolean values to deal with conditions on data.

In the work on \ACP, the emphasis has always been on axiom systems.
In this paper, we put the emphasis on algebras.
That is, \ACP\ process algebras are looked upon in the same way as
groups, rings, fields, etc.\ are looked upon in universal algebra (see
e.g.~\cite{BS81a}).
The set of equations that are taken to characterize \ACP\ process
algebras is a revision of the axiom system \ACP.
The revision is primarily a matter of streamlining.
However, it also involves a minor generalization that allows for the
generalization to meadow enriched \ACP\ process algebras to proceed
smoothly.

In $\mu$CRL and PSF, we find variable-binding operators generalizing
associative operators of \ACP.
Our second main objective is to determine to what extent such
variable-binding operators fit in with meadow enriched \ACP\ process
algebras.
To achieve this objective, we introduce, for all associative operators
from the signature of meadow enriched \ACP\ process algebras that are 
not of an auxiliary nature, variable-binding operators as 
generalizations.

These variable-binding operators, which give rise to comprehended terms,
have the property that they can always be eliminated.
That is, for each comprehended term, we can derive from axioms
concerning the variable-binding operators that the comprehended term is
equal to a term over the signature of meadow enriched \ACP\ process
algebras.
Those axioms are axioms of a calculus because the distinction between
free and bound variables is essential in derivations.
The terms of this process calculus are interpreted in meadow enriched
\ACP\ process algebras.

Full elimination of all variable-binding operators occurring in a
comprehended term can lead to a combinatorial explosion.
We show that a combinatorial explosion can be prevented if
variable-binding operators that bind variables with a two-valued range
are still permitted in the resulting term.
We also show that in the latter case the size of the resulting term can
be further reduced if we add an identity element for sequential
composition to meadow enriched \ACP\ process algebras.
Moreover, we demonstrate that there is an alternative to introducing 
variable-binding operators for several associative operators on 
processes if we add a sort of process sequences and suitable operators 
on process sequences to meadow enriched \ACP\ process algebras.

For readability, it is imprecisely said above that the mathematical
structure for data in meadow enriched \ACP\ process algebras is a
meadow.
It is actually a signed meadow, i.e.\ a meadow expanded with a signum
operation.
In the presence of a signum operation, the ordering on the elements of a
meadow that corresponds with the usual ordering on the elements of a
field becomes definable.

This paper is organized as follows.
First, we give a brief summary of signed meadows
(Section~\ref{sect-meadows}).
Next, we introduce the notion of an \ACP\ process algebra
(Section~\ref{sect-ACP-process-alg}) and the notion of a meadow enriched
\ACP\ process algebra (Section~\ref{sect-MD-ACP-process-alg}).
After that, we associate a calculus with meadow enriched \ACP\ process
algebras (Section~\ref{sect-calculus-MD-ACP-process-alg}) and define the
interpretation of terms of this calculus in meadow enriched \ACP\
process algebras (Section~\ref{sect-interpretation-calculus}).
Following this, we investigate the consequences of elimination of
variable-binding operators from comprehended terms on the size of the
resulting terms (Section~\ref{sect-binary-var-binding-ops}).
Then, we investigate the effects of adding an identity element for
sequential composition to \ACP\ process algebras
(Section~\ref{sect-ACP-process-alg-id-seqc}) and the effects of adding
process sequences to \ACP\ process algebras
(Section~\ref{sect-ACP-process-alg-proc-vec}).
Finally, we make some concluding remarks
(Section~\ref{sect-conclusions}).

This paper consolidates material from~\cite{BM09b,BM09dd}.

\section{Signed Meadows}
\label{sect-meadows}

In this paper, the mathematical structure for data is a signed meadow.
In this section, we give a brief summary of signed meadows.

A meadow is a field with the multiplicative inverse operation made total
by imposing that the multiplicative inverse of zero is zero.
A signed meadow is a meadow expanded with a signum operation.
Meadows were defined for the first time in~\cite{BT07a} and were
investigated in e.g.~\cite{BBP13a,BHT09a,BM09g}.
The expansion of meadows with a signum operation originates
from~\cite{BBP13a}.

The signature of meadows is the same as the signature of fields.
It is a one-sorted signature.
We make the single sort explicit because we will extend this signature
to a two-sorted signature in Section~\ref{sect-MD-ACP-process-alg}.
The signature of meadows consists of the sort $\Quant$ of
\emph{quantities} and the following constants and operators:
\begin{itemize}
\item
the constants $\const{0}{\Quant}$ and $\const{1}{\Quant}$;
\item
the binary \emph{addition} operator
$\funct{+}{\Quant \x \Quant}{\Quant}$;
\item
the binary \emph{multiplication} operator
$\funct{\mul}{\Quant \x \Quant}{\Quant}$;
\item
the unary \emph{additive inverse} operator
$\funct{-}{\Quant}{\Quant}$;
\item
the unary \emph{multiplicative inverse} operator
$\funct{\minv}{\Quant}{\Quant}$.
\end{itemize}

We assume that there is a countably infinite set $\cU$ of variables of
sort $\Quant$, which contains $u$, $v$ and $w$, with and without
subscripts.
Terms are built as usual.
We use infix notation for the binary operators ${} + {}$ and
${} \mul {}$, prefix notation for the unary operator ${} -$, and postfix
notation for the unary operator ${}\minv$.
We use the usual precedence convention to reduce the need for
parentheses.
We introduce subtraction and division as abbreviations:
$p - q$ abbreviates $p + (-q)$ and
$p / q$ abbreviates $p \mul q\minv$.
For each non-negative natural number $n$, we write $\ul{n}$ for the
numeral for $n$.
That is, the term $\ul{n}$ is defined by induction on $n$ as follows:
$\ul{0} = 0$ and $\ul{n+1} = \ul{n} + 1$.
We also use the notation $p^n$ for exponentiation with a natural number
as exponent.
For each term $p$ over the signature of meadows, the term $p^n$ is
defined by induction on $n$ as follows: $p^0 = 1$ and
$p^{n+1} = p^n \mul p$.

The constants and operators from the signature of meadows are adopted
from rational arithmetic, which gives an appropriate intuition about
these constants and operators.

A \emph{meadow} is an algebra with the signature of meadows that 
satisfies the equations given in Table~\ref{eqns-meadow}.%
\begin{table}[!t]
\caption{Axioms for meadows}
\label{eqns-meadow}
\begin{eqntbl}
\begin{eqncol}
(u + v) + w = u + (v + w)                                             \\
u + v = v + u                                                         \\
u + 0 = u                                                             \\
u + (-u) = 0
\end{eqncol}
\qquad\quad
\begin{eqncol}
(u \mul v) \mul w = u \mul (v \mul w)                                 \\
u \mul v = v \mul u                                                   \\
u \mul 1 = u                                                          \\
u \mul (v + w) = u \mul v + u \mul w
\end{eqncol}
\qquad\quad
\begin{eqncol}
(u\minv)\minv = u                                                   \\
u \mul (u \mul u\minv) = u
\end{eqncol}
\end{eqntbl}
\end{table}
Thus, a meadow is a commutative ring with identity equipped with a
multiplicative inverse operation ${}\minv$ satisfying the reflexivity
equation $(u\minv)\minv = u $ and the restricted inverse equation
$u \mul (u \mul u\minv) = u$.
From the equations given in Table~\ref{eqns-meadow}, the equation
$0\minv = 0$ is derivable (see~\cite{BT07a}).

A \emph{non-trivial meadow} is a meadow that satisfies
the \emph{separation axiom}
\begin{ldispl}
0 \neq 1\;;
\end{ldispl}%
and a \emph{cancellation meadow} is a meadow that satisfies the
\emph{cancellation axiom}
\begin{ldispl}
u \neq 0 \And u \mul v = u \mul w \Implies v = w\;,
\end{ldispl}%
or equivalently, the \emph{general inverse law}
\begin{ldispl}
u \neq 0 \Implies u \mul u\minv = 1\;.
\end{ldispl}%

Important properties of non-trivial cancellation meadows are 
$u / u = 0 \Iff u = 0$ and $u / u = 1 \Iff u \neq 0$.
Henceforth, we will write $\cond{p}{r}{q}$ for
$(1 - r / r) \mul p + (r / r) \mul q$.
For non-trivial cancellation meadows, $\cond{p}{r}{q}$ can be read as 
follows: if $r$ equals $0$ then $p$ else~$q$.

Each field with the multiplicative inverse operation made total by 
imposing that the multiplicative inverse of zero is zero is a
non-trivial meadow.
The prime example of a non-trivial cancellation meadow is the rational 
number field with the multiplicative inverse operation made total by 
imposing that the multiplicative inverse of zero is zero.

A \emph{signed meadow} is a meadow expanded with a unary \emph{signum}
operation $\sign$ satisfying the equations given in
Table~\ref{eqns-signum}.%
\begin{table}[!t]
\caption{Additional axioms for signum operation}
\label{eqns-signum}
\begin{eqntbl}
\begin{eqncol}
\sign(u / u) = u / u                                                  \\
\sign(1 - u / u) = 1 - u / u                                          \\
\sign(-1) = -1
\end{eqncol}
\qquad\quad
\begin{eqncol}
\sign(u\minv) = \sign(u)                                              \\
\sign(u \mul v) = \sign(u) \mul \sign(v)                              \\
(1 - \frac{\sign(u) - \sign(v)}{\sign(u) - \sign(v)}) \mul
(\sign(u + v) - \sign(u)) = 0
\end{eqncol}
\end{eqntbl}
\end{table}
In combination with the cancellation axiom, the last equation in this
table is equivalent to the conditional equation
$\sign(u) = \sign(v) \Implies \sign(u + v) = \sign(u)$.
In signed meadows, the predicates $<$ and $>$ are defined as follows:
\begin{ldispl}
u < v \Iff 1 + \sign(u - v) = 0\;,
\\
u > v \Iff 1 - \sign(u - v) = 0\;.
\end{ldispl}%
In~\cite{BBP13a}, it is shown that the equational theories of signed
meadows and signed cancellation meadows are identical.

\section{ACP Process Algebras}
\label{sect-ACP-process-alg}

In this section, we introduce the notion of an \ACP\ process algebra.
This notion originates from the models of \ACP, an axiom system that was 
first presented in~\cite{BK84b}.
A comprehensive introduction to \ACP\ can be found in~\cite{BW90,Fok00}.

It is assumed that a fixed but arbitrary set $\Act$ of
\emph{atomic action names}, with $\dead \notin \Act$, has been given.

The signature of \ACP\ process algebras is a one-sorted signature.
We make the single sort explicit because we will extend this signature
to a two-sorted signature in Section~\ref{sect-MD-ACP-process-alg}.
The signature of \ACP\ process algebras consists of the sort $\Proc$ of
\emph{processes} and the following constants, operators, and predicate
symbols:
\begin{itemize}
\item
the \emph{deadlock} constant $\const{\dead}{\Proc}$;
\item
for each $e \in \Act$,
the \emph{atomic action} constant $\const{e}{\Proc}$;
\item
the binary \emph{alternative composition} operator
$\funct{\altc}{\Proc \x \Proc}{\Proc}$;
\item
the binary \emph{sequential composition} operator
$\funct{\seqc}{\Proc \x \Proc}{\Proc}$;
\item
the binary \emph{parallel composition} operator
$\funct{\parc}{\Proc \x \Proc}{\Proc}$;
\item
the binary \emph{left merge} operator
$\funct{\leftm}{\Proc \x \Proc}{\Proc}$;
\item
the binary \emph{communication merge} operator
$\funct{\commm}{\Proc \x \Proc}{\Proc}$;
\item
for each $H \subseteq \Act$,
the unary \emph{encapsulation} operator
$\funct{\encap{H}}{\Proc}{\Proc}$;
\item
the unary \emph{atomic action} predicate symbol
$\predt{\isact}{\Proc}$.
\end{itemize}

We assume that there is a countably infinite set $\cX$ of variables of
sort $\Proc$, which contains $x$, $y$ and $z$, with and without
subscripts.
Terms are built as usual.
We use infix notation for the binary operators.
We use the following precedence conventions to reduce the need for
parentheses: the operator $\altc$ binds weaker than all other binary
operators and the operator $\seqc$ binds stronger than all other binary
operators.

Let $P$ and $Q$ be closed terms of sort $\Proc$.
Intuitively, the constants, operators and predicate symbols introduced
above can be explained as follows:
\begin{itemize}
\item
$\dead$ is not capable of doing anything;
\item
$e$ is only capable of performing atomic action $e$ and next terminating
successfully;
\item
$P \altc Q$ behaves either as $P$ or as $Q$, but not both;
\item
$P \seqc Q$ first behaves as $P$ and on successful termination of $P$ it
next behaves as $Q$;
\item
$P \parc Q$ behaves as the process that proceeds with $P$ and $Q$ in
parallel;
\item
$P \leftm Q$ behaves the same as $P \parc Q$, except that it starts
with performing an atomic action of $P$;
\item
$P \commm Q$ behaves the same as $P \parc Q$, except that it starts with
performing an atomic action of $P$ and an atomic action of $Q$
synchronously;
\item
$\encap{H}(P)$ behaves the same as $P$, except that atomic actions from
$H$ are blocked;
\item
$\isact(P)$ holds if $P$ is an atomic action.
\end{itemize}
The operators $\leftm$ and $\commm$ are of an auxiliary nature. 
They are needed for the axiomatization of \ACP\ process algebras.

The predicate symbol $\isact$ is used to distinguish atomic actions 
from other processes. 
This predicate symbol, which does not occur in the axiom system \ACP, 
obviates the need to have a constant for each atomic action.
An alternative way to distinguish atomic actions from other processes is 
to have a subsort $\mathbf{A}$ of the sort $\Proc$.
We have not chosen this alternative way because it complicates matters
considerably.
Moreover, we prefer to keep close to elementary algebraic specification
(see e.g.~\cite{BT06a}).
By the notational convention introduced below, we seldom have to use
the predicate symbol $\isact$ explicitly.

In equations between terms of sort $\Proc$, we will use a notational
convention which requires the following assumption: there is a countably
infinite set $\cX' \subseteq \cX$ that contains $a$, $b$ and $c$, with
and without subscripts, but does not contain $x$, $y$ and $z$, with and
without subscripts.
Let $\phi$ be an equation between terms of sort $\Proc$, and let
$\set{a_1,\ldots,a_n}$ be the set of all variables from $\cX'$ that
occur in $\phi$.
Then we write $\phi$ for
$\isact(x_1) \And \ldots \And \isact(x_n) \Implies \phi'$, where
$\phi'$ is $\phi$ with, for all $i \in [1,n]$, all occurrences of $a_i$
replaced by $x_i$, and
$x_1,\ldots,x_n$ are variables from $\cX$ that do not occur in $\phi$.

An \emph{ACP process algebra} is an algebra with the signature of \ACP\
process algebras that satisfies the formulas given in
Table~\ref{eqns-ACP}.%
\begin{table}[!t]
\caption{Axioms for \ACP\ process algebras}
\label{eqns-ACP}
\begin{eqntbl}
\begin{eqncol}
x \altc y = y \altc x                                           \\
(x \altc y) \altc z = x \altc (y \altc z)                       \\
x \altc x = x                                                   \\
(x \altc y) \seqc z = x \seqc z \altc y \seqc z                 \\
(x \seqc y) \seqc z = x \seqc (y \seqc z)                       \\
x \altc \dead = x                                               \\
\dead \seqc x = \dead                                           \\
{}                                                              \\
{}                                                              \\
\encap{H}(e) = e     \hfill \mif e \notin H                     \\
\encap{H}(e) = \dead \hfill \mif e \in H                        \\
\encap{H}(x \altc y) = \encap{H}(x) \altc \encap{H}(y)          \\
\encap{H}(x \seqc y) = \encap{H}(x) \seqc \encap{H}(y)
\end{eqncol}
\qquad\quad
\begin{eqncol}
x \parc y =
        (x \leftm y \altc y \leftm x) \altc x \commm y          \\
a \leftm x = a \seqc x                                          \\
a \seqc x \leftm y = a \seqc (x \parc y)                        \\
(x \altc y) \leftm z = x \leftm z \altc y \leftm z              \\
a \commm b \seqc x = (a \commm b) \seqc x                       \\
a \seqc x \commm b \seqc y = (a \commm b) \seqc (x \parc y)     \\
(x \altc y) \commm z = x \commm z \altc y \commm z              \\
x \commm y = y \commm x                                         \\
(x \commm y) \commm z = x \commm (y \commm z)                   \\
\dead \commm x = \dead                                          \\
{}                                                              \\
\isact(e)                                                       \\
\isact(x) \And \isact(y) \Implies \isact(x \commm y)
\end{eqncol}
\end{eqntbl}
\end{table}
Three formulas in this table are actually schemas of formulas:
$e$ is a syntactic variable which stands for an arbitrary constant of
sort $\Proc$ (i.e.\ an atomic action constant or the deadlock constant).
A side condition is added to two schemas to restrict the constants for
which the syntactic variable stands.

Because the notational convention introduced above is used, the four
equations in Table~\ref{eqns-ACP} that are actually conditional
equations look the same as their counterpart in the axiom system \ACP.
It happens that these conditional equations allow for the generalization
to meadow enriched \ACP\ process algebras to proceed smoothly.
Apart from this, the set of formulas given in Table~\ref{eqns-ACP}
differs from the axiom system \ACP\ on three points.
Firstly, the equations $x \commm y = y \commm x$,
$(x \commm y) \commm z = x \commm (y \commm z)$, and
$\dead \commm x = \dead$ have been added.
In the axiom system \ACP, all closed substitution instances of these
equations are derivable.
Secondly, the equations $a \seqc x \commm b = (a \commm b) \seqc x$
and $x \commm (y \altc z) = x \commm\nolinebreak y \altc x \commm z$ 
have been removed.
These equations can be derived using the added equation
$x \commm y = y \commm x$.
Thirdly, the formulas $\isact(e)$ and
$\isact(x) \And \isact(y) \Implies \isact(x \commm y)$ have been added.
They express that the processes denoted by constants of sort $\Proc$ are
atomic actions and that the processes that result from the communication
merge of two atomic actions are atomic actions.
This does not exclude that there are additional atomic actions, which is
impossible in the case of \ACP.

For each model of the axiom system \ACP\ given in~\cite{BW90}, its 
expansion with the appropriate interpretation of the atomic action 
predicate symbol $\isact$ is an \ACP\ process algebra.

Not all processes in an \ACP\ process algebra have to be interpretations
of closed terms, even if all atomic actions are interpretations of
closed terms.
The processes concerned may be solutions of sets of recursion equations.
It is recommendable to restrict the attention to \ACP\ process algebras
satisfying additional axioms by which sets of recursion equations that
fulfil a guardedness condition have unique solutions.
For a comprehensive treatment of this issue, the reader is referred
to~\cite{BW90}.

\section{Meadow Enriched ACP Process Algebras}
\label{sect-MD-ACP-process-alg}

In this section, we introduce the notion of a meadow enriched \ACP\
process algebra.
This notion is a simple generalization of the notion of an \ACP\
process algebra introduced in Section~\ref{sect-ACP-process-alg} to
processes in which data are involved.
The elements of a signed meadow are taken as data.

The signature of meadow enriched \ACP\ process algebras is a two-sorted
signature.
It consists of the sorts, constants and operators from the signatures of
\ACP\ process algebras and signed meadows and in addition the following
operators:
\begin{itemize}
\item
for each $n \in \Nat$ and $e \in \Act$,
the $n$-ary \emph{data handling atomic action} operator
$\funct{e}
 {\underbrace{\Quant \x \cdots \x \Quant}_{n\; \mathrm{times}}}{\Proc}$;
\item
the binary \emph{guarded command} operator
$\funct{\gc}{\Quant \x \Proc}{\Proc}$.
\end{itemize}

We take the variables in $\cU$ for the variables of sort $\Quant$ and
the variables in $\cX$ for the variables of sort $\Proc$.
We assume that the sets $\cU$ and $\cX$ are disjoint.
Terms are built as usual for a many-sorted signature
(see e.g.~\cite{ST99a,Wir90a}).
We use the same notational conventions as before.
In addition, we use infix notation for the binary operator ${} \gc {}$.

Let $p_1,\ldots,p_n$ and $p$ be closed terms of sort $\Quant$ and $P$ be
a closed term of sort $\Proc$.
Intuitively, the additional operators can be explained as follows:
\begin{itemize}
\item
$e(p_1,\ldots,p_n)$ is only capable of performing data handling atomic
action $e(p_1,\ldots,p_n)$ and next terminating successfully;
\item
$p \gc P$ behaves as the process $P$ if $p$ equals $0$ and is not
capable of doing anything otherwise.
\end{itemize}

The different guarded command operators that have been proposed before
in the setting of \ACP\ have one thing in common: their first operand is
considered to stand for an element of the domain of a Boolean algebra
(see e.g.~\cite{BM05a}).
In contrast with those guarded command operators, the first operand of
the guarded command operator introduced here is considered to stand for
an element of the domain of a signed meadow.

A \emph{meadow enriched ACP process algebra} is an algebra with the 
signature of meadow enriched \ACP\ process algebras that satisfies the 
formulas given in Tables~\ref{eqns-meadow}--\ref{eqns-ACPmd}.%
\begin{table}[!t]
\caption{Additional axioms for meadow enriched \ACP\ process algebras}
\label{eqns-ACPmd}
\begin{eqntbl}
\begin{array}{@{}l@{}}
\begin{eqncol}
0 \gc x = x                                                       \\
1 \gc x = \dead                                                   \\
u \gc x = (u / u) \gc x                                           \\
u \gc (v \gc x) = (1 - (1 - u/u) \mul (1 - v/v)) \gc x            \\
u \gc x \altc v \gc x = (u/u \mul v/v) \gc x                      \\
\end{eqncol}
\qquad\quad
\begin{eqncol}
u \gc \dead = \dead                                               \\
u \gc (x \altc y) = u \gc x \altc u \gc y                         \\
u \gc x \seqc y = (u \gc x) \seqc y                               \\
(u \gc x) \leftm y = u \gc (x \leftm y)                           \\
(u \gc x) \commm y = u \gc (x \commm y)                           \\
\encap{H}(u \gc x) = u \gc \encap{H}(x)              \vspace*{1.5ex}              
\end{eqncol}
\\
\begin{ceqncol}
e \commm e' = e'' \Implies {} \\
\multicolumn{2}{@{}l@{}}{\;\;
e(u_1,\ldots,u_n) \commm e'(v_1,\ldots,v_n) =
(u_1 - v_1) \gc
(\cdots \gc ((u_n - v_n) \gc e''(u_1,\ldots,u_n))\cdots)}         \\
e \commm e' = \dead\phantom{'} \Implies
e(u_1,\ldots,u_n) \commm e'(v_1,\ldots,v_n) = \dead               \\
e(u_1,\ldots,u_n) \commm e'(v_1,\ldots,v_m) = \dead & \mif n \neq m
\eqnsep
\encap{H}(e(u_1,\ldots,u_n)) = e(u_1,\ldots,u_n)
 & \mif e \not\in H                                               \\
\encap{H}(e(u_1,\ldots,u_n)) = \dead
 & \mif e \in H
\eqnsep
\isact(e(u_1,\ldots,u_n))
\end{ceqncol}
\end{array}
\end{eqntbl}
\end{table}
Like in Table~\ref{eqns-ACP}, some formulas in
Table~\ref{eqns-ACPmd} are actually schemas of formulas: $e$, $e'$ and 
$e''$ are syntactic variables which stand for arbitrary constants of 
sort $\Proc$ different from $\dead$ and, in addition, $n$ and $m$ stand 
for arbitrary natural numbers.

For meadow enriched \ACP\ process algebras that satisfy the separation
axiom and the cancellation axiom, the five equations concerning the 
guarded command operator on the left-hand side in the upper half of 
Table~\ref{eqns-ACPmd} can easily be understood by taking the view that 
$0$ and $1$ represent the Boolean values $\True$ and $\False$, 
respectively.
In that case, we have that
\begin{itemize}
\item
$p/p$ models the test that yields $\True$ if $p = 0$ and $\False$
otherwise;
\item
if both $p$ and $q$ are equal to $0$ or $1$, then
$1 - p$ models $\Not p$, $p \mul q$ models $p \Or q$, and
consequently $1 - (1 - p) \mul (1 - q)$ models $p \And q$.
\end{itemize}
From this view, the equations given in the upper half of
Table~\ref{eqns-ACPmd} differ from the axioms for the most
general kind of guarded command operator that has been proposed in the
setting of \ACP\ (see e.g.~\cite{BM05a}) on two points only.
Firstly, the equation $u \gc x = u / u \gc x$ has been added.
This equation formalizes the informal explanation of the guarded command
given above.
Secondly, the equation $x \commm (u \gc y) = u \gc (x \commm y)$ has
been removed.
This equation can be derived using the equation
$x \commm y = y \commm x$ from Table~\ref{eqns-ACP}.

The equations in Table~\ref{eqns-ACPmd} concerning the communication
merge of data handling atomic actions formalize the intuition that two
data handling atomic actions $e(p_1,\ldots,p_n)$ and
$e'(q_1,\ldots,q_m)$ can be performed synchronously iff 
$e$ and $e'$ can be performed synchronously and $n = m$ and $p_1 = q_1$ 
and $\ldots$ and $p_n = q_n$.
The equations concerning the encapsulation of data handling atomic
actions agree with the way in which the encapsulation of data handling 
atomic actions is dealt with in $\mu$CRL and PSF.
The formula concerning the atomic action predicate simply expresses that
data handling atomic actions are also atomic actions.

Henceforth, we will write $\cond{P}{p}{Q}$ for
$(p / p) \gc P \altc (1 - p / p) \gc Q$.
For meadow enriched \ACP\ process algebras that satisfy the separation
axiom and the cancellation axiom, $\cond{P}{p}{Q}$ can be read as 
follows: if $p$ equals $0$ then $P$ else $Q$.

For each \ACP\ process algebra $\gA'$ and each signed non-trivial 
cancellation meadow $\gA''$, there exists an amalgamation of $\gA'$ and 
$\gA''$, i.e.\ a model of the axioms for both \ACP\ process algebras and 
signed non-trivial cancellation meadows whose restriction to the 
signature of \ACP\ process algebras is $\gA'$ and whose restriction to 
the signature of signed meadows is $\gA''$ (by the amalgamation result 
about expansions presented as Theorem~6.1.1 in~\cite{Hod93a}, adapted to 
the many-sorted case).      
For each amalgamation of an \ACP\ process algebra with a countably
infinite set of atomic actions and a signed non-trivial cancellation 
meadow, its expansion with the appropriate interpretation of the data 
handling atomic action operators $e$ and the guarded command operator 
$\gc$ is a meadow enriched \ACP\ process algebra.

In subsequent sections, we write $\Sigmp$ for the signature of meadow
enriched \ACP\ process algebras.

\section{A Calculus for Meadow Enriched ACP Process~Algebras}
\label{sect-calculus-MD-ACP-process-alg}

In this section, we associate a calculus with meadow enriched \ACP\
process algebras.
For that, we introduce, for all associative operators from the signature
of meadow enriched \ACP\ process algebras that are not of an auxiliary 
nature, variable-binding operators as generalizations.
To build terms of the calculus, called binding terms, both the constants
and operators from the signature of meadow enriched \ACP\ process
algebras and those variable-binding operators are available.

The sets of \emph{binding terms} of sorts $\Quant$ and $\Proc$, written
$\BT{\Quant}$ and $\BT{\Proc}$, respectively, are inductively defined by
the following formation rules (where $S_1,\ldots,S_n$ and $S$ range over 
the sorts from $\Sigmp$):
\begin{itemize}
\item
if $u \in \cU$, then $u \in \BT{\Quant}$;
\item
if $x \in \cX$, then $x \in \BT{\Proc}$;
\item
if $\const{c}{S}$ is a constant from $\Sigmp$, then $c \in \BT{S}$;
\item
if $\funct{o}{S_1 \x \cdots \x S_n}{S}$ is an operator from $\Sigmp$ and
$t_1 \in \BT{S_1}$, \ldots, $t_n \in \BT{S_n}$, then
$o(t_1,\ldots,t_n) \in \BT{S}$;
\item
if $u \in \cU$ and $t \in \BT{\Quant}$, then, for each $n \in \Natpos$,
$\Sum{n}{u} t \in \BT{\Quant}$ and $\Prod{n}{u} t \in \BT{\Quant}$;%
\footnote{We write $\Natpos$ for the set $\Nat \diff \set{0}$.}
\item
if $u \in \cU$ and $t \in \BT{\Proc}$, then, for each $n \in \Natpos$,
$\Chc{n}{u} t \in \BT{\Proc}$, $\Seq{n}{u} t \in \BT{\Proc}$, and
$\Par{n}{u} t \in \BT{\Proc}$.
\end{itemize}

$\Sum{n}{}$, $\Prod{n}{}$, $\Chc{n}{}$, $\Seq{n}{}$, and $\Par{n}{}$ are
the variable-binding operators mentioned above.
They bind variables that range over all quantities that can be denoted
by numerals $\ul{k}$ where $0 \leq k < n$ (in plain terms, quantities 
that correspond to natural numbers less than $n$).
Intuitively, $\Sum{n}{u} t$ stands for $t_1 + \cdots + t_n$, where $t_i$ 
($1 \leq i \leq n$) is $t$ with all occurrences of $u$ replaced by 
$\ul{u-1}$, and analogously in the case of $\Prod{n}{}$, $\Chc{n}{}$, 
$\Seq{n}{}$, and $\Par{n}{}$.

A binding term $t$ is a \emph{comprehended term} if it is a binding term
of the form $\Bnd{n}{u} t'$, where $\Bnd{n}{}$ is a variable-binding
operator.%
\footnote
{The name comprehended term originates from the name comprehended
 expression introduced in~\cite{RLG92a}.
}
Below, we will give the axioms of the calculus associated with meadow
enriched \ACP\ process algebras.
We have to do with a calculus because the distinction between free and
bound variables is essential in applying the axioms concerning
comprehended terms.

A variable $u \in \cU$ occurs \emph{free} in a binding term $t$
if there is an occurrence of $u$ in $t$ that is not in a subterm of the
form $\Bnd{n}{u} t'$, where $\Bnd{n}{}$ is a variable-binding operator.
A binding term $t$ is \emph{closed} if it is a binding term in which no
variable occurs free.

Substitution of a binding term $t'$ of sort $\Proc$ for a variable
$x \in \cX$ in a binding term $t$, written $t \subst{t'}{x}$, is defined
by induction on the structure of $t$ as usual:
\pagebreak[2]
\begin{ldispl}
\begin{caeqns}
v \subst{t'}{x} & = & v\;, \\
y \subst{t'}{x} & = &
\left\{\begin{col}
       t' \\ y
       \end{col}
\right.
       &
       \begin{col}
       \mif x \equiv y\;,\footnotemark \\ \mother\;,
       \end{col}
\\
c \subst{t'}{x} & = & c\;, \\
o(t_1,\ldots,t_n) \subst{t'}{x} & = &
o(t_1 \subst{t'}{x},\ldots,t_n \subst{t'}{x})\;, \\[.75ex]
(\Bnd{n}{v} t'') \subst{t'}{x} & = &
\left\{\begin{col}
       \Bnd{n}{w} ((t'' \subst{w}{v}) \subst{t'}{x}) \\[3ex]
       \Bnd{n}{v} (t'' \subst{t'}{x})
       \end{col}
\right.
       &
       \begin{col}
       \mif v \mathrm{\;occurs\;free\;in\;} t' \\
       (w \mathrm{\;does\;not\;occur\;in\;} t',t'')\;, \\
       \mother\;.
       \end{col}
\end{caeqns}
\end{ldispl}%
\footnotetext{We write $\equiv$ for syntactic identity.}%
and substitution of a binding term $t'$ of sort $\Quant$ for a variable
$u \in \cU$ in a binding term $t$, written $t \subst{t'}{u}$, is defined
by induction on the structure of $t$ as follows:
\begin{ldispl}
\begin{caeqns}
v \subst{t'}{u} & = &
\left\{\begin{col}
       t' \\ v
       \end{col}
\right.
       &
       \begin{col}
       \mif u \equiv v\;, \\ \mother\;,
       \end{col}
\\
x \subst{t'}{u} & = & x\;, \\
c \subst{t'}{u} & = & c\;, \\
o(t_1,\ldots,t_n) \subst{t'}{u} & = &
o(t_1 \subst{t'}{u},\ldots,t_n \subst{t'}{u})\;, \\[.75ex]
(\Bnd{n}{v} t'') \subst{t'}{u} & = &
\left\{\begin{col}
       \Bnd{n}{v} t'' \eqnsep
       \Bnd{n}{w} ((t'' \subst{w}{v}) \subst{t'}{u}) \eqnsep
       \Bnd{n}{v} (t'' \subst{t'}{u})
       \end{col}
\right.
       &
       \begin{col}
       \mif u \equiv v\;, \\
       \mif u \not\equiv v,
            v \mathrm{\;occurs\;free\;in\;} t' \\
       (w \mathrm{\;does\;not\;occur\;in\;} t',t'')\;, \\
       \mother\;.
       \end{col}
\end{caeqns}
\end{ldispl}%

The essentiality of the distinction between free and bound variables in
applying the axioms concerning comprehended terms originates from the
substitutions involved in applying those axioms.

The axioms of the calculus associated with meadow enriched \ACP\ process
algebras are the formulas given in
Tables~\ref{eqns-meadow}--\ref{eqns-compr-terms}.%
\begin{table}[!t]
\caption{Axioms for comprehended terms}
\label{eqns-compr-terms}
\begin{leqntbl}
\begin{eqncol}
\Sum{n}{u} p  = \Sum{n}{v} (p \subst{v}{u}) 
 \hfill \;\; \mif v \mathrm{\;does\;not\;occur\;free\;in\;} p         \\
\Sum{1}{u} p  = p \subst{0}{u}                                        \\
\Sum{n+1}{u} p  = p \subst{0}{u} + \Sum{n}{u} (p \subst{u + 1}{u})
\eqnsep
\Prod{n}{u} p = \Prod{n}{v} (p \subst{v}{u})                          
 \hfill \;\; \mif v \mathrm{\;does\;not\;occur\;free\;in\;} p         \\
\Prod{1}{u} p = p \subst{0}{u}                                        \\
\Prod{n+1}{u} p = p \subst{0}{u} \mul \Prod{n}{u} (p \subst{u + 1}{u})
\eqnsep
\Chc{n}{u} P  = \Chc{n}{v} (P \subst{v}{u})                           
 \hfill \;\; \mif v \mathrm{\;does\;not\;occur\;free\;in\;} P         \\
\Chc{1}{u} P  = P \subst{0}{u}                                        \\
\Chc{n+1}{u} P = P \subst{0}{u} \altc \Chc{n}{u} (P \subst{u + 1}{u})
\eqnsep
\Seq{n}{u} P  = \Seq{n}{v} (P \subst{v}{u})                           
 \hfill \;\; \mif v \mathrm{\;does\;not\;occur\;free\;in\;} P         \\
\Seq{1}{u} P  = P \subst{0}{u}                                        \\
\Seq{n+1}{u} P = P \subst{0}{u} \seqc \Seq{n}{u} (P \subst{u + 1}{u})
\eqnsep
\Par{n}{u} P  = \Par{n}{v} (P \subst{v}{u})                           
 \hfill \;\; \mif v \mathrm{\;does\;not\;occur\;free\;in\;} P         \\
\Par{1}{u} P  = P \subst{0}{u}                                        \\
\Par{n+1}{u} P = P \subst{0}{u} \parc \Par{n}{u} (P \subst{u + 1}{u})
\end{eqncol}
\end{leqntbl}
\end{table}
Like some equations in Tables~\ref{eqns-ACP} and~\ref{eqns-ACPmd}, the
equations in Table~\ref{eqns-compr-terms} are actually schemas of
equations: $p$ and $P$ are syntactic variables which stand for arbitrary
binding terms of sort $\Quant$ and sort $\Proc$, respectively, and $n$
stands for an arbitrary positive natural number.

The axioms given in Table~\ref{eqns-compr-terms} are called the
\emph{axioms for comprehended terms}.
They consist of three axioms, including an $\alpha$-conversion axiom,
for each of the variable-binding operators of the calculus.
For each comprehended term, we can derive from these axioms that the
comprehended term is equal to a term over the signature of meadow
enriched \ACP\ process algebras.
\begin{theorem}[Elimination]
\label{theorem-elim-binders}
For all comprehended terms $t$, there exists a term $t'$ over the
signature of meadow enriched \ACP\ process algebras such that $t = t'$
is derivable from the axioms for comprehended terms.
\end{theorem}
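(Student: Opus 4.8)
The plan is to prove the slightly stronger statement that \emph{every} binding term $t$ (not just every comprehended term) is provably equal, using the axioms for comprehended terms together with the rules of equational reasoning of the calculus, to a binder-free term, i.e.\ a term over $\Sigmp$; the theorem is then the special case in which $t$ has the form $\Bnd{n}{u} t'$. I would argue by induction on the structure of $t$. The variable and constant cases are immediate, since such terms are already over $\Sigmp$. For an operator application $o(t_1,\ldots,t_m)$ with $o$ from $\Sigmp$, the induction hypotheses give binder-free $t_i^\ast$ with $t_i = t_i^\ast$ derivable, and one use of congruence yields $o(t_1,\ldots,t_m) = o(t_1^\ast,\ldots,t_m^\ast)$, which is over $\Sigmp$. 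The only real work is the case $t = \Bnd{n}{u} t''$: by the induction hypothesis $t'' = (t'')^\ast$ is derivable with $(t'')^\ast$ binder-free, so by congruence, applied under the variable-binding operator, it suffices to eliminate a single outermost binder from $\Bnd{n}{u} (t'')^\ast$ whose body is already binder-free.

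That last step I would isolate as a lemma: \emph{if $p$ is a binder-free binding term, then for every $n \in \Natpos$ the term $\Bnd{n}{u} p$ is derivably equal to a binder-free term.} I would prove it by induction on $n$, using only the base axiom $\Bnd{1}{u} p = p\subst{0}{u}$ and the recursion axiom $\Bnd{n+1}{u} p = p\subst{0}{u} \odot \Bnd{n}{u}(p\subst{u+1}{u})$, where $\odot$ is the associative operator generalized by $\Bnd{n}{}$ (one of $+$, $\mul$, $\altc$, $\seqc$, $\parc$). The base case is the axiom itself. In the step, the recursion axiom rewrites $\Bnd{n+1}{u} p$ to $p\subst{0}{u} \odot \Bnd{n}{u}(p\subst{u+1}{u})$; since $p$ is binder-free, so is $p\subst{u+1}{u}$, and the induction hypothesis applies to it. Unwinding fully, one shows $\Bnd{n}{u} p$ to be derivably equal to the $n$-fold $\odot$-combination $p\subst{s_0}{u} \odot \cdots \odot p\subst{s_{n-1}}{u}$, where $s_0$ is $0$ and $s_{k+1}$ is $s_k + 1$, keeping track of the iterated substitutions by means of the composition identity $(p\subst{u+1}{u})\subst{s}{u} = p\subst{s+1}{u}$. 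As each $s_k$ and hence each $p\subst{s_k}{u}$ is binder-free, the resulting term is over $\Sigmp$, which proves the lemma.

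I expect the main obstacle to be purely the bookkeeping in this lemma: correctly composing the iterated capture-avoiding substitutions $p\subst{u+1}{u}$ and verifying that no variable is captured. Capture cannot occur here, because the body $p$ stays binder-free throughout the unwinding, so the $\alpha$-conversion axiom $\Bnd{n}{u} p = \Bnd{n}{v}(p\subst{v}{u})$ is not actually needed for the elimination; it is the base and recursion axioms, together with reflexivity, transitivity and congruence, that do the work. It is worth noting that the meadow and \ACP\ axioms are not required either: the numeral-like terms $s_k$, namely $0 + 1 + \cdots + 1$, produced by the substitutions are already legitimate terms over $\Sigmp$ and need not be simplified to the numerals $\ul{k}$. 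Thus the elimination is carried out entirely within the calculus of comprehended terms, as the statement asserts.
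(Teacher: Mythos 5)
Your proposal is correct and takes essentially the same approach as the paper: the paper's proof likewise first establishes, by induction on $n$, that a single binder over a body already over the signature of meadow enriched \ACP\ process algebras can be eliminated, and then handles the general case by induction on the depth of the term, which is the same argument as your structural induction strengthened to all binding terms. The additional bookkeeping you supply (the unwinding into an $n$-fold combination, the substitution-composition identity, and the observation that neither variable capture nor the $\alpha$-conversion axiom arises) merely fills in details the paper leaves as ``easy to prove.''
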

\begin{proof}
If $t$ is of the form $\Sum{n}{u} t''$, $\Prod{n}{u} t''$,
$\Chc{n}{u} t''$, $\Seq{n}{u} t''$ or $\Par{n}{u} t''$, where $t''$ is a
term over the signature of meadow enriched \ACP\ process algebras of the
right sort, then it is easy to prove by induction on $n$ that there
exists a term $t'$ over the signature of meadow enriched \ACP\ process
algebras such that $t = t'$ is derivable from the axioms for
comprehended terms.
Using this fact, the general case is easily proved by induction on the
depth of $t$.
\qed
\end{proof}

The comprehended terms of the calculus associated with meadow enriched
\ACP\ process algebras are \emph{finitary} comprehended terms because
the variable-binding operators of the calculus bind variables with a
finite range only.
This is a prerequisite for elimination of variable-binding operators.

\section{The Interpretation of Terms of the Calculus}
\label{sect-interpretation-calculus}

In this section, we define the interpretation of terms of the calculus
associated with meadow enriched \ACP\ process algebras.
We assume that a fixed but arbitrary meadow enriched \ACP\ process
algebra $\gA$ has been given.

We write $\sigma_\gA$, where $\sigma$ in $\Sigmp$, for the
interpretation of $\sigma$ in $\gA$.
Moreover,
we write $f + 1$, where $\funct{f}{\Quant_\gA}{\Quant_\gA}$ or
$\funct{f}{\Quant_\gA}{\Proc_\gA}$,
for the function $\funct{f'}{\Quant_\gA}{\Quant_\gA}$ or
$\funct{f'}{\Quant_\gA}{\Proc_\gA}$, respectively, defined by
$f'(q) = f(q +_\gA 1_\gA)$.

The terms of the calculus introduced above can be directly interpreted 
in $\gA$.
To achieve that, we associate with each variable-binding operator
$\Bnd{n}{}$ of the calculus a function
$\funct{\Bnd{n}{\gA}}{(\Quant_\gA \to \Quant_\gA)}{\Quant_\gA}$ or
$\funct{\Bnd{n}{\gA}}{(\Quant_\gA \to \Proc_\gA)}{\Proc_\gA}$
as follows:
\begin{ldispl}
\begin{aeqns}
\Sum{1}{\gA}(f)   & = & f(0_\gA)\;, \\
\Sum{n+1}{\gA}(f) & = & f(0_\gA) +_\gA \Sum{n}{\gA}(f + 1)\;,
\eqnsep
\Prod{1}{\gA}(f)   & = & f(0_\gA)\;, \\
\Prod{n+1}{\gA}(f) & = & f(0_\gA) \mul_\gA \Prod{n}{\gA}(f + 1)\;,
\end{aeqns}
\quad\;\;
\begin{aeqns}
\Chc{1}{\gA}(f)   & = & f(0_\gA)\;, \\
\Chc{n+1}{\gA}(f) & = & f(0_\gA) \altc_\gA \Chc{n}{\gA}(f + 1)\;,
\eqnsep
\Seq{1}{\gA}(f)   & = & f(0_\gA)\;, \\
\Seq{n+1}{\gA}(f) & = & f(0_\gA) \seqc_\gA \Seq{n}{\gA}(f + 1)\;,
\eqnsep
\Par{1}{\gA}(f)   & = & f(0_\gA)\;, \\
\Par{n+1}{\gA}(f) & = & f(0_\gA) \parc_\gA \Par{n}{\gA}(f + 1)\;.
\end{aeqns}
\end{ldispl}%

The interpretation of a term of the calculus in $\gA$ depends on the
elements of $\Quant_\gA$ and $\Proc_\gA$ that are associated with the
variables that occur free in it.
We model such associations by functions
$\funct{\rho}{(\cU \union \cX)}{(\Quant_\gA \union \Proc_\gA)}$ such
that $u \in \cU \Implies \rho(u) \in \Quant_\gA$ and
$x \in \cX \Implies \rho(x) \in \Proc_\gA$.
These functions are called \emph{assignments} in $\gA$.
We write $\Ass{\gA}$ for the set of all assignments in $\gA$.
For each assignment $\rho \in \Ass{\gA}$, $u \in \cU$ and
$q \in \Quant_\gA$, we write $\rho(u \to q)$ for the unique assignment
$\rho' \in \Ass{\gA}$ such that $\rho'(v) = \rho(v)$ if $v \not\equiv u$
and $\rho'(u) = q$.

The interpretation of terms of the calculus in a meadow enriched \ACP\
process algebra $\gA$ is given by the function
$\funct{\Int{\gA}{\ph}}{(\BT{\Quant} \union \BT{\Proc})}
                       {(\Ass{\gA} \to (\Quant_\gA \union \Proc_\gA))}$
defined as follows:
\begin{ldispl}
\begin{aeqns}
\Int{\gA}{u}(\rho) & = & \rho(u)\;, \\
\Int{\gA}{x}(\rho) & = & \rho(x)\;, \\
\Int{\gA}{c}(\rho) & = & c_\gA\;, \\
\Int{\gA}{o(t_1,\ldots,t_n)}(\rho) & = &
o_\gA(\Int{\gA}{t_1}(\rho),\ldots,\Int{\gA}{t_n}(\rho))\;, \\
\Int{\gA}{\Bnd{n}{u} t}(\rho) & = & \Bnd{n}{\gA}(f),
\mathrm{\,where\,} f \mathrm{\,is\,defined\,by\,}
f(q) = \Int{\gA}{t}(\rho(u \to q))\;.
\end{aeqns}
\end{ldispl}%

The axioms of the calculus associated with meadow enriched \ACP\ process
algebras are sound with respect to the interpretation of the terms of
the calculus given above.
\begin{theorem}[Soundness]
\label{theorem-soundness}
For all equations $t = t'$ that belong to the axioms of the calculus
associated with meadow enriched \ACP\ process algebras, we have that
$\Int{\gA}{t}(\rho) = \Int{\gA}{t'}(\rho)$ for all assignments
$\rho \in \Ass{\gA}$.
\end{theorem}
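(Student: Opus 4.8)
The plan is to reduce the statement to two auxiliary facts about the interpretation function and then verify the axioms table by table. The first fact is a \emph{coincidence lemma}: for every binding term $t$, the value $\Int{\gA}{t}(\rho)$ depends only on the restriction of $\rho$ to the variables occurring free in $t$, i.e.\ if $\rho$ and $\rho'$ agree on every variable free in $t$ then $\Int{\gA}{t}(\rho) = \Int{\gA}{t}(\rho')$. I would prove this by a routine induction on the structure of $t$; the only clause of interest is the binder case, where passing under $\Bnd{n}{u}$ removes $u$ from the free variables and replaces $\rho$, $\rho'$ by $\rho(u \to q)$, $\rho'(u \to q)$, which still agree on the free variables of the body.

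The second and central fact is a \emph{substitution lemma} for substitution of a quantity term: for all $t \in \BT{\Quant} \union \BT{\Proc}$, all $t' \in \BT{\Quant}$, all $u \in \cU$ and all $\rho \in \Ass{\gA}$,
\[
\Int{\gA}{t \subst{t'}{u}}(\rho) = \Int{\gA}{t}(\rho(u \to \Int{\gA}{t'}(\rho)))\;.
\]
Only quantity substitution is needed, since every substitution occurring in Table~\ref{eqns-compr-terms} replaces the bound quantity variable. I would prove this by induction on the structure of $t$. The variable, constant and operator cases are immediate from the defining clauses of $\Int{\gA}{\ph}$; the delicate case is $t \equiv \Bnd{n}{v} t''$, which splits exactly as the defining clauses of the substitution operation. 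When $u \equiv v$ the substitution does nothing, while on the semantic side the outer update at $u$ is overwritten by the binder's update at $v \equiv u$; when $v$ does not occur free in $t'$ the induction hypothesis applies directly after pushing the substitution under the binder; and when $v$ occurs free in $t'$ one must first rename $v$ to a fresh $w$, apply the induction hypothesis twice (once for $\subst{w}{v}$ and once for $\subst{t'}{u}$), and invoke the coincidence lemma to discard the harmless updates at $w$ and $v$. This capture-avoiding case is where I expect the real work to lie.

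With these lemmas in hand, soundness follows by inspecting the axioms. For every equation in Tables~\ref{eqns-meadow}--\ref{eqns-ACPmd} both sides are terms over $\Sigmp$, so the relevant clauses of $\Int{\gA}{\ph}$ coincide with the ordinary homomorphic term evaluation under the valuation $\rho$. Since $\gA$ is by definition a meadow enriched \ACP\ process algebra it satisfies these equations under every assignment, which is precisely the required identity $\Int{\gA}{t}(\rho) = \Int{\gA}{t'}(\rho)$. The conditional formulas in those tables are not equations $t = t'$ and thus fall outside the statement.

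For the axioms for comprehended terms in Table~\ref{eqns-compr-terms} I would treat the five families uniformly, carrying out the computation for $\Sum{n}{}$ in detail and remarking that $\Prod{n}{}$, $\Chc{n}{}$, $\Seq{n}{}$ and $\Par{n}{}$ are identical after replacing $+_\gA$ by the corresponding operation. Unfolding $\Int{\gA}{\Sum{n}{u} p}(\rho) = \Sum{n}{\gA}(f)$ with $f(q) = \Int{\gA}{p}(\rho(u \to q))$ and comparing with the defining recursion for $\Sum{n}{\gA}$, the base axiom $\Sum{1}{u} p = p \subst{0}{u}$ reduces to $f(0_\gA) = \Int{\gA}{p \subst{0}{u}}(\rho)$, which is the substitution lemma with $t' \equiv 0$; the step axiom reduces to the pointwise identity $(f+1)(q) = \Int{\gA}{p \subst{u+1}{u}}(\rho(u \to q))$, again the substitution lemma, now with $t' \equiv u + 1$ and using $\Int{\gA}{u+1}(\rho(u \to q)) = q +_\gA 1_\gA$; and the $\alpha$-conversion axiom $\Sum{n}{u} p = \Sum{n}{v}(p \subst{v}{u})$ follows (under the standard proviso that $v$ does not occur free in $p$) because the substitution lemma together with the coincidence lemma shows that the two functions fed to $\Sum{n}{\gA}$ agree pointwise. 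Since the interpretation of each binder is determined solely by the function it receives, equality of these functions yields equality of the interpretations, completing the argument.
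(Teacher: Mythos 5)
Your proposal is correct and takes essentially the same route as the paper's (very terse) proof: the axioms of Tables~\ref{eqns-meadow}--\ref{eqns-ACPmd} are sound because $\gA$ is by definition a meadow enriched \ACP\ process algebra, and the axioms for comprehended terms are handled by structural induction, which in your write-up is packaged into the coincidence and substitution lemmas. The paper compresses the latter into ``easily proved by induction on the structure of $t$''; your two lemmas, together with the pointwise verification of the three axiom schemas per binder (including the freshness proviso for $\alpha$-conversion), are precisely the content that phrase glosses over.
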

\begin{proof}
For all equations $t = t'$ that belong to the axioms for meadow enriched
\ACP\ process algebras, the soundness follows immediately from the fact
that $\gA$ is a meadow enriched \ACP\ process algebra.
For all equations $t = t'$ that belong to the axioms for comprehended
terms, the soundness is easily proved by induction on the structure of
\nolinebreak[2] $t$.
\qed
\end{proof}

Because the terms of the calculus associated with meadow enriched \ACP\ 
process algebras can be directly interpreted in meadow enriched \ACP\ 
process algebras, we consider the variable-binding operators of the 
calculus to constitute a process algebraic feature.
Fitting them in an algebraic framework does not involve any serious 
theoretical complication.
It is much more difficult to fit the variable-binding operators from
$\mu$CRL and PSF that generalize associative operators of \ACP, but 
do not give rise to finitary comprehended terms, in an algebraic 
framework (see e.g.~\cite{Lut02a}).

\section{The Binary Variable-Binding Operators}
\label{sect-binary-var-binding-ops}

Full elimination of all variable-binding operators occurring in a
comprehended term can lead to a combinatorial explosion.
In this section, we show that no combinatorial explosion takes place if
variable-binding operators that bind variables with a two-valued range
are still permitted in the resulting term.

We begin by looking at an example.
From the axioms for comprehended terms, we easily derive the equation
\begin{ldispl}
\Sum{7}{u} p = p \subst{\ul{0}}{u} + \cdots + p \subst{\ul{6}}{u}\;.%
\end{ldispl}%
This suggests that, on full elimination of variable-binding operators, 
the size of the resulting term grows rapidly as the size of the original
term increases (there are seven substitution instances of $p$ and they
have increasing sizes).
Using the axioms for comprehended terms as well as other axioms of the
calculus, we derive the following:
\begin{ldispl}
\begin{aeqns}
   &   &
p \subst{\ul{0}}{u} + \cdots + p \subst{\ul{6}}{u}
\\ & = &
p \subst{\ul{0}}{u} + \cdots + p \subst{\ul{6}}{u} + 0
\\ & = &
(\cond{0}{1 - \sign(u - \ul{6})}{p}) \subst{\ul{0}}{u} + \cdots +
(\cond{0}{1 - \sign(u - \ul{6})}{p}) \subst{\ul{7}}{u}
\\ & = &
\Sum{2}{u}
 \bigl(\Sum{2}{v}
  \bigl(\Sum{2}{w}
   \bigl((\cond{0}{1 - \sign(u - \ul{6})}{p})
          \subst{\ul{2}^2 \mul w + \ul{2}^1 \mul v + \ul{2}^0 \mul u}{u}
   \bigr)\bigr)\bigr)
\\ & = &
\Sum{2}{u}
 \bigl(\Sum{2}{v}
  \bigl(\Sum{2}{w}
   \bigl(((\cond{0}{1 - \sign(u - \ul{6})}{p})
          \subst{\ul{2} \mul v + u}{u}) \subst{\ul{2} \mul w + v}{v}
   \bigr)\bigr)\bigr)\;.
\end{aeqns}
\end{ldispl}%
This suggests that, if variable-binding operators that bind variables 
with a two-valued range are still permitted in the resulting term, its 
size grows far less rapidly as the size of the original term increases
(there is only one substitution instance of $p$).
However, a counterpart of the first step in the derivation above does
not exist for comprehended terms of the forms $\Seq{n}{u} p$ and
$\Par{n}{u} p$ because identity elements for sequential and parallel 
composition are missing.

Henceforth, we will use the term
\emph{binary variable-binding operators} for the variable-binding
operators that bind variables with a two-valued range and the term
\emph{non-binary variable-binding operators} for the other
variable-binding operators.

The size of binding terms is given by the function
$\funct{\tsize}{(\BT{\Quant} \union \BT{\Proc})}{\Nat}$
defined as follows:
\begin{ldispl}
\begin{aeqns}
\tsize(u) & = & 1\;, \\
\tsize(x) & = & 1\;, \\
\tsize(c) & = & 1\;, \\
\tsize(o(t_1,\ldots,t_n)) & = &
\tsize(t_1) + \cdots + \tsize(t_n) + 1\;, \\
\tsize\bigl(\Bnd{n}{u}(t)\bigr) & = & \tsize(t) + \logii(n) + 1\;.%
\footnotemark
\end{aeqns}
\end{ldispl}%
\footnotetext
{We use the convention that, whenever we write $\logii(n)$ in a context
 requiring a natural number, $\lceil \logii(n) \rceil$ is implicitly
 meant.
}%
The summand $\logii(n)$ occurs in the equation for the size of a term of
the form $\Bnd{n}{u}(t)$ because having (the cardinality of) the range
of $u$ encoded in the variable-binding operator is an artifice that must
be taken into account using the most efficient way in which $\ul{n}$
could be represented by a binding term.
It follows from Proposition~\ref{prop-eqns-derived} formulated below
that the size of this term is of order $\logii(n)$.

The important insights relevant to elimination of non-binary
variable-binding operators are brought together in the following
proposition.
\begin{proposition}
\label{prop-eqns-derived}
From the axioms of the calculus associated with meadow enriched \ACP\
process algebras, we can derive the equations from
Table~\ref{eqns-derived} for each binding term $p$ of sort $\Quant$,
binding term $P$ of sort $\Proc$, and $n,m \in \Natpos$.%
\begin{table}[!t]
\caption{Derived equations for comprehended terms}
\label{eqns-derived}
\begin{leqntbl}
\begin{ceqncol}
\Sum{1}{u} p  = p \subst{0}{u}                                        \\
\Sum{2}{u} p  = p \subst{0}{u} + p \subst{1}{u}                       \\
\Sum{2^{n+1}}{u} p =
\Sum{2}{u}
 \bigl(\Sum{2^{n}}{v} (p \subst{\ul{2} \mul v + u}{u})\bigr)          \\
\Sum{n+1}{u} p =
\Sum{2^m}{u} (\cond{0}{1 - \sign(u - \ul{n})}{p})
 & \mif n + 1 \leq 2^m
\eqnsep
\Prod{1}{u} p = p \subst{0}{u}                                        \\
\Prod{2}{u} p = p \subst{0}{u} \mul p \subst{1}{u}                    \\
\Prod{2^{n+1}}{u} p =
\Prod{2}{u}
 \bigl(\Prod{2^{n}}{v} (p \subst{\ul{2} \mul v + u}{u})\bigr)         \\
\Prod{n+1}{u} p =
\Prod{2^m}{u} (\cond{1}{1 - \sign(u - \ul{n})}{p})
 & \mif n + 1 \leq 2^m
\eqnsep
\Chc{1}{u} P  = P \subst{0}{u}                                        \\
\Chc{2}{u} P  = P \subst{0}{u} \altc P \subst{1}{u}                   \\
\Chc{2^{n+1}}{u} P =
\Chc{2}{u}
 \bigl(\Chc{2^{n}}{v} (P \subst{\ul{2} \mul v + u}{u})\bigr)          \\
\Chc{n+1}{u} P =
\Chc{2^m}{u} (\cond{\dead}{1 - \sign(u - \ul{n})}{P})
 & \mif n + 1 \leq 2^m
\eqnsep
\Seq{1}{u} P  = P \subst{0}{u}                                        \\
\Seq{2}{u} P  = P \subst{0}{u} \seqc P \subst{1}{u}                   \\
\Seq{2^{n+1}}{u} P =
\Seq{2}{u}
 \bigl(\Seq{2^{n}}{v} (P \subst{\ul{2} \mul v + u}{u})\bigr)          \\
\Seq{n+1}{u} P =
\Seq{2^m}{u} P \seqc \Seq{(n+1)-2^m}{u} (P \subst{\ul{2}^m + u}{u})
 & \mif 2^m < n + 1 < 2^{m+1}
\eqnsep
\Par{1}{u} P  = P \subst{0}{u}                                        \\
\Par{2}{u} P  = P \subst{0}{u} \parc P \subst{1}{u}                   \\
\Par{2^{n+1}}{u} P =
\Par{2}{u}
 \bigl(\Par{2^{n}}{v} (P \subst{\ul{2} \mul v + u}{u})\bigr)          \\
\Par{n+1}{u} P =
\Par{2^m}{u} P \parc \Par{(n+1)-2^m}{u} (P \subst{\ul{2}^m + u}{u})
 & \mif 2^m < n + 1 < 2^{m+1}
\end{ceqncol}
\end{leqntbl}
\end{table}
\end{proposition}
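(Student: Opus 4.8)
The plan is to deduce every equation in Table~\ref{eqns-derived} from a single \emph{linear expansion lemma}, which I would prove first by induction on $n$: for each variable-binding operator $\Bnd{n}{}$ and each binding term $t$ of the matching sort, $\Bnd{n}{u} t$ is derivably equal to the left-to-right combination $t\subst{\ul{0}}{u} \odot t\subst{\ul{1}}{u} \odot \cdots \odot t\subst{\ul{n-1}}{u}$, where $\odot$ is the corresponding binary operator $+$, $\mul$, $\altc$, $\seqc$ or $\parc$. The base cases are immediate (the case $n = 1$ is a comprehension axiom and $n = 2$ follows from one application of the recursion axiom), and the induction step uses the recursion axiom $\Bnd{n+1}{u} t = t\subst{0}{u} \odot \Bnd{n}{u}(t\subst{u+1}{u})$ together with the substitution identity $(t\subst{u+1}{u})\subst{\ul{k}}{u} = t\subst{\ul{k+1}}{u}$, which rests on the provable numeral law $\ul{k} + 1 = \ul{k+1}$. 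The first two lines of each block of Table~\ref{eqns-derived} are then exactly these base cases.

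For the doubling equations (the third line of each block) I would expand both sides by the linear expansion lemma and compare. The key bookkeeping fact is the substitution-composition identity $\bigl(t\subst{\ul{2} \mul v + u}{u}\bigr)\subst{\ul{j}}{v}\subst{\ul{i}}{u} = t\subst{\ul{2j+i}}{u}$, a routine consequence of the ring axioms (via $\ul{2} \mul \ul{j} = \ul{2j}$ and $\ul{2j} + \ul{i} = \ul{2j+i}$); using it, the expansion of the right-hand side produces exactly the factors $t\subst{\ul{0}}{u}, \ldots, t\subst{\ul{2^{n+1}-1}}{u}$, each occurring once. For the commutative and associative operators $+$, $\mul$, $\altc$ and $\parc$ this is enough: the two sides agree up to a permutation of factors, so they are equal by repeated use of commutativity and associativity of $\odot$.

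The main obstacle is sequential composition. Since $\seqc$ is associative but not commutative, the factors produced by the radix-2 decomposition cannot be permuted, and one must verify that they already appear in the correct left-to-right order. This is delicate precisely because the digit $u$ bound by the outer binder is the least significant one, so a naive interleaving would list the factors in the wrong (even-then-odd) order; the order-preserving decomposition is instead the prefix--suffix split recorded in the fourth line of the $\Seq{}{}$ block, $\Seq{n+1}{u} P = \Seq{2^m}{u} P \seqc \Seq{(n+1)-2^m}{u}(P\subst{\ul{2}^m + u}{u})$, whose first factor enumerates the indices $0,\ldots,2^m-1$ and whose second factor enumerates $2^m,\ldots,n$. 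I would derive this split from the linear expansion using associativity of $\seqc$ alone, and reconcile it with the power-of-two equation by checking that the concatenation of sub-ranges reproduces the natural order; carrying out this order bookkeeping carefully, rather than merely matching the sets of indices, is the crux of the whole proposition. The same prefix--suffix device is needed for $\Par{}{}$, not because $\parc$ fails to be commutative, but because, like $\seqc$, it has no identity element.

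It remains to treat the padding equations, the fourth line of the $\Sum{}{}$, $\Prod{}{}$ and $\Chc{}{}$ blocks. Here I would again expand the right-hand side and simplify each guarded instance $\cond{c}{1 - \sign(u - \ul{n})}{t}$, with $c$ the identity element $0$, $1$ or $\dead$, at $u \mapsto \ul{k}$. Using the derivable sign facts $\sign(\ul{k} - \ul{n}) = -1, 0, 1$ according as $k <, =, > n$, and the guarded-command axioms of Table~\ref{eqns-ACPmd} (so that $\cond{c}{r}{t}$ reduces to $t$ when $r$ is provably nonzero and to $c$ when $r$ is $0$), each summand with $k \le n$ becomes $t\subst{\ul{k}}{u}$ and each summand with $k > n$ becomes $c$. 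Since these identity summands occur at the largest indices, hence last in the expansion, they are absorbed by the right-identity laws $u + 0 = u$, $u \mul 1 = u$ and $x \altc \dead = x$, leaving the linear expansion of $\Bnd{n+1}{u} t$. The only non-mechanical ingredient is the derivability of the underlying meadow identities (for instance $\ul{2}/\ul{2} = 1$), which I would justify by noting that they hold in all signed cancellation meadows and that, by the result recalled in Section~\ref{sect-meadows}, the equational theories of signed meadows and signed cancellation meadows coincide.
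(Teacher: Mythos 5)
Your overall plan---one linear expansion lemma, then a permutation argument for the doubling equations and an absorption argument for the padding equations---is sound for the $\Sum{n}{}$, $\Prod{n}{}$ and $\Chc{n}{}$ blocks, and it has the same substance as the paper's proof, which instead proves per-operator lemmas by induction on $n$ (an even/odd split lemma for the doubling equations, a masking lemma for the padding equations). The fourth equations of the $\Seq{n}{}$ and $\Par{n}{}$ blocks (the prefix--suffix splits) are also unproblematic, since they use only associativity, which for $\seqc$ is an axiom of Table~\ref{eqns-ACP}. Your appeal to the coincidence of the equational theories of signed meadows and signed cancellation meadows to justify identities such as $\ul{2}/\ul{2} = 1$ and the sign facts is a point the paper leaves implicit, and it is genuinely needed: these identities are not derivable from the meadow axioms alone.

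The genuine gap is exactly where you hesitate: the doubling equation for $\Seq{n}{}$. You correctly observe that the outer binder carries the least significant digit, so that expanding $\Seq{2}{u}\bigl(\Seq{2^{n}}{v}(P \subst{\ul{2} \mul v + u}{u})\bigr)$ lists the factors in even-then-odd order; but you then assert that this can be reconciled ``by checking that the concatenation of sub-ranges reproduces the natural order''. That check fails, and no bookkeeping can make it succeed. For $n = 1$ the right-hand side is derivably equal to $(P\subst{\ul{0}}{u} \seqc P\subst{\ul{2}}{u}) \seqc (P\subst{\ul{1}}{u} \seqc P\subst{\ul{3}}{u})$, whereas the left-hand side is $P\subst{\ul{0}}{u} \seqc (P\subst{\ul{1}}{u} \seqc (P\subst{\ul{2}}{u} \seqc P\subst{\ul{3}}{u}))$; since $\seqc$ is not commutative, these are not equal---the equation is not even sound in the sense of Theorem~\ref{theorem-soundness} (take $P = e(u)$ in a bisimulation model over the rational meadow), so no derivation from sound axioms can exist. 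The honest conclusion is that the third $\Seq{n}{}$ equation of Table~\ref{eqns-derived} is wrong as printed, and that the derivable variant puts the outer binder on the most significant digit, e.g.\ $\Seq{2^{n+1}}{u} P = \Seq{2}{u}\bigl(\Seq{2^{n}}{v}(P \subst{\ul{2}^{n} \mul u + v}{u})\bigr)$, or equivalently interchanges the two binders. Note that the paper itself commits the same error: its proof says the $\Seq{n}{}$ case ``goes analogously'' to $\Sum{n}{}$, which silently uses commutativity of the operator. So your suspicion was exactly right, but it should have led you to reject the equation rather than to a reconciliation step that cannot be carried out. A lesser, related debt concerns $\Par{n}{}$: commutativity of $\parc$ is derivable from $x \commm y = y \commm x$, but your permutation argument also needs associativity of $\parc$ for arbitrary open terms, which is not among the axioms of Table~\ref{eqns-ACP} and is not obviously derivable from them (a debt the paper's proof shares).
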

\begin{proof}
It follows immediately from the axioms for comprehended terms that the
first two equations for $\Sum{n}{}$ are derivable.
It is easy to prove by induction on $n$ that
\begin{ldispl}
\Sum{2 \mul n}{u} p =
\Sum{n}{u} (p \subst{2 \mul u}{u}) +
\Sum{n}{u} (p \subst{2 \mul u + 1}{u})
\end{ldispl}%
is derivable.
From this it follows easily that the third equation for $\Sum{n}{}$ is
derivable.
It is easy to prove by case distinction between $n = 1$ and $n > 1$ that
\begin{ldispl}
\Sum{n}{u} (\cond{0}{1 - \sign(u - \ul{0})}{p}) = p \subst{0}{u}
\end{ldispl}%
is derivable.
Using this fact, it is easy to prove by induction on $n$ that for all
$m \geq n + 1$:
\begin{ldispl}
\Sum{n+1}{u} p = \Sum{m}{u} (\cond{0}{1 - \sign(u - \ul{n})}{p})
\end{ldispl}%
is derivable.
From this it follows easily that the fourth equation for $\Sum{n}{}$ is
derivable.
The proofs for the equations for $\Prod{n}{}$, $\Chc{n}{}$,
$\Seq{n}{}$ and $\Par{n}{}$ go analogously, with the exception of the
fourth equation for $\Seq{n}{}$ and $\Par{n}{}$.
It is easy to prove by induction on $n$ that for all $m < n$:
\begin{ldispl}
\Seq{n}{u} P = \Seq{m}{u} P \seqc \Seq{n-m}{u} (P \subst{m + u}{u})
\end{ldispl}%
is derivable.
From this it follows easily that the fourth equation for $\Seq{n}{}$ is
derivable.
The proof for the fourth equation for $\Par{n}{}$ goes analogously.
\qed
\end{proof}

The axioms for comprehended terms give rise to a corollary about full
elimination of all variable-binding operators.
\begin{corollary}
\label{corollary-elim}
Let $t$ be a comprehended term without comprehended terms as proper
subterms, and let $k = \tsize(t)$.
Then there exists a term $t'$ over the signature of meadow enriched
\ACP\ process algebras such that $t = t'$ is derivable from the axioms
of the calculus associated with meadow enriched \ACP\ process algebras
and
\begin{itemize}
\item
$\tsize(t') = O(k^2 \mul 2^k)$;
\item
$\tsize(t') = \Omega(k \mul 2^{k-2})$ if $t$ is a term of the form
$\Sum{n}{u} t''$ or $\Prod{n}{u} t''$ and the number of times that $u$
occurs free in $t''$ is greater than zero;
\item
$\tsize(t') = \Omega(k \mul 2^{k-3})$ if $t$ is a term of the form
$\Chc{n}{u} t''$, $\Seq{n}{u} t''$ or $\Par{n}{u} t''$ and the number of
times that $u$ occurs free in $t''$ is greater than zero.
\end{itemize}
\end{corollary}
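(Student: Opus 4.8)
The plan is to regard $t$ as a single comprehended term $\Bnd{n}{u} t''$ in which $t''$ is a term over $\Sigmp$ containing no binders, and to remove the one remaining binder by unfolding it completely. First I would use the three axioms for the relevant binder in Table~\ref{eqns-compr-terms} to derive, by induction on $n$, the fully unfolded form $\Sum{n}{u} t'' = t''\subst{\ul{0}}{u} + \cdots + t''\subst{\ul{n-1}}{u}$, and analogously with $\mul$, $\altc$, $\seqc$, $\parc$ for $\Prod{n}{}$, $\Chc{n}{}$, $\Seq{n}{}$, $\Par{n}{}$; this is exactly the phenomenon exhibited for $\Sum{7}{u}$ in the example preceding Proposition~\ref{prop-eqns-derived}. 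The expressions substituted for the bound variable in this unfolding are the unary numerals $\ul{i}$, whose size grows linearly in $i$ and would already overshoot the target $O(k^2 \mul 2^k)$. The essential preliminary step is therefore to replace each $\ul{i}$ by a Horner-style binary representation $\mathrm{bin}(i)$, of size $O(\logii(i))$, built only from $0$, $1$, $\ul{2}$, $+$ and $\mul$. Since $\ul{i}$ and $\mathrm{bin}(i)$ both reduce to the $i$-fold sum of $1$, they denote the same element in every commutative ring, so their equality holds throughout the variety of commutative rings and is hence derivable from the ring fragment of the meadow axioms by completeness of equational logic; congruence then lets me replace $\ul{i}$ by $\mathrm{bin}(i)$ inside each summand. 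The resulting binder-free term is the witnessing $t'$, and $t = t'$ is derivable.

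For the upper bound I would count. The number of summands is $n \leq 2^{\logii(n)} \leq 2^{k-2}$, because $k = \tsize(t'') + \logii(n) + 1$ and $\tsize(t'') \geq 1$. Each summand is $t''$ with its free occurrences of $u$ — at most $\tsize(t'') \leq k$ of them — replaced by a numeral $\mathrm{bin}(i)$ of size $O(\logii(n)) = O(k)$, so each summand has size $O(k^2)$. Adding the $n-1$ connecting operator symbols gives $\tsize(t') = O(k^2 \mul 2^k)$, uniformly in which of the five binders occurs.

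For the lower bound I would count again, now exploiting that $u$ occurs free in $t''$: each of the $n$ summands then literally contains a copy of $\mathrm{bin}(i)$, and $\sum_{i=0}^{n-1} \tsize(\mathrm{bin}(i)) = \Theta(n \mul \logii(n))$, the upper half of the range alone forcing $\Omega(n \mul \logii(n))$. Substituting $k = \tsize(t'') + \logii(n) + 1$ together with $n > 2^{\logii(n)-1}$ turns this into a bound of the form $\Omega\bigl(k \mul 2^{k - \tsize(t'') - 2}\bigr)$, which, for fixed $k$, is driven by the least possible value of $\tsize(t'')$. The smallest term of sort $\Quant$ in which $u$ occurs free is $u$ itself, of size $1$, yielding $\Omega(k \mul 2^{k-2})$ for $\Sum{n}{}$ and $\Prod{n}{}$; the smallest term of sort $\Proc$ in which the $\Quant$-variable $u$ occurs free is a unary data handling action $e(u)$, of size $2$, yielding $\Omega(k \mul 2^{k-3})$ for $\Chc{n}{}$, $\Seq{n}{}$ and $\Par{n}{}$. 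The step I expect to be the main obstacle is precisely this exponent bookkeeping: pinning down the offsets $2$ and $3$ requires combining the sort-dependent minimal-size observation with careful handling of the ceiling hidden in $\logii(n)$ and of the $\Theta(n \mul \logii(n))$ numeral-size sum, whereas the derivation of $t = t'$ and the upper-bound count are routine once the binary encoding of numerals is available.
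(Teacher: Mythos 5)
Your proposal takes essentially the same route as the paper's own proof: both exhibit $t'$ as the full unfolding of the single binder with logarithmic-size (Horner/binary) terms in place of the numerals $\ul{i}$, obtain $O(k^2 \mul 2^k)$ by the same summand count, and obtain the $\Omega$ bounds by combining $\sum_i \Theta(\logii(i+1)) = \Theta(n \mul \logii(n))$ with the observation that, when $u$ occurs free, the worst case arises at the minimal sizes $\tsize(t'') = 1$ (sort $\Quant$) and $\tsize(t'') = 2$ (sort $\Proc$, e.g.\ $e(u)$). Your extra justification of $\ul{i} = \mathrm{bin}(i)$ via completeness of equational logic, and your explicit worst-case reading of the $\Omega$ statements, merely spell out what the paper leaves implicit (the paper's ``smallest term $p$ with $p = \ul{i}$'' of size $\Theta(\logii(i+1))$ is exactly your binary representation), and the constant offsets in the exponents are immaterial since $\Omega(k \mul 2^{k-2})$ and $\Omega(k \mul 2^{k-3})$ coincide as asymptotic classes.
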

\begin{proof}
Term $t$ is a binding term of the form $\Bnd{n}{u} t''$, where
$\Bnd{n}{}$ is a variable-binding operator.
Let $k' = \tsize(t'')$,
let $k''$ be the number of times that $u$ occurs free in $t''$, and
let $l_i$ ($0 \leq i < n$) be the size of the smallest term $p$ over
the signature of meadow enriched \ACP\ process algebras such that
$p = \ul{i}$.
Then
$\tsize(t') = n \mul k' + \sum_{i = 0}^{n - 1} (k'' \mul l_i) + n - 1$.
Because $k = k' + \logii(n) + 1$, we know that $k' < k$, $\logii(n) < k$
and $n < 2^k$.
Moreover, we know that $k'' < k'$ and $l_i = \Theta(\logii(i + 1))$.
Hence $\tsize(t') = O(k^2 \mul 2^k)$.
We also know that
$k' \geq 1$ and, because $k = k' + \logii(n) + 1$,
$\logii(n) \geq k - 2$ and $n \geq 2^{k-2}$ if $t$ is of the form
$\Sum{n}{u} t''$ or $\Prod{n}{u} t''$; and that
$k' \geq 2$ and, because $k = k' + \logii(n) + 1$,
$\logii(n) \geq k - 3$ and $n \geq 2^{k-3}$ if $t$ is of the form
$\Chc{n}{u} t''$, $\Seq{n}{u} t''$ or $\Par{n}{u} t''$.
Hence, in the case where $k'' \geq 1$,
$\tsize(t') = \Omega(k \mul 2^{k-2})$ if $t$ is of the form
$\Sum{n}{u} t''$ or $\Prod{n}{u} t''$ and
$\tsize(t') = \Omega(k \mul 2^{k-3})$ if $t$ is of the form
$\Chc{n}{u} t''$, $\Seq{n}{u} t''$ or $\Par{n}{u} t''$.
\qed
\end{proof}

Proposition~\ref{prop-eqns-derived} gives rise to a corollary about
full elimination of all non-binary variable-binding operators.
\begin{corollary}
\label{corollary-elim-nonbin}
Let $t$ be a comprehended term without comprehended terms as proper
subterms, and let $k = \tsize(t)$.
Then there exists a binding term $t'$ without non-binary
variable-binding operators such that $t = t'$ is derivable from the
axioms of the calculus associated with meadow enriched \ACP\ process
algebras and
\begin{itemize}
\setlength{\itemsep}{.25ex}
\item
$\tsize(t') = O(k^3)$ if $t$ is a term of the form $\Sum{n}{u} t''$,
$\Prod{n}{u} t''$ or $\Chc{n}{u} t''$;
\item
$\tsize(t') = \Omega(k^2)$ if $t$ is a term of the form $\Sum{n}{u}
t''$, $\Prod{n}{u} t''$ or $\Chc{n}{u} t''$;
\item
$\tsize(t') = O(k^4)$ if $t$ is a term of the form $\Seq{n}{u} t''$
or~$\Par{n}{u} t''$;
\item
$\tsize(t') = \Omega(k^3)$ if $t$ is a term of the form $\Seq{n}{u} t''$
or~$\Par{n}{u} t''$ and the number of times that $u$ occurs free in 
$t''$ is greater than zero.
\end{itemize}
\end{corollary}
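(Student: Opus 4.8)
The plan is to reuse the size bookkeeping from the proof of Corollary~\ref{corollary-elim}, but with full elimination replaced by the elimination scheme that Proposition~\ref{prop-eqns-derived} provides. Write the comprehended term as $t = \Bnd{n}{u} t''$, set $k' = \tsize(t'')$, and let $k''$ be the number of free occurrences of $u$ in $t''$; since $k = k' + \logii(n) + 1$ we have $k' < k$, $k'' < k$, $\logii(n) < k$ and $n < 2^k$. I would treat the operators $\Sum{n}{}$, $\Prod{n}{}$, $\Chc{n}{}$ and the operators $\Seq{n}{}$, $\Par{n}{}$ separately, because Proposition~\ref{prop-eqns-derived} eliminates the two groups by genuinely different schemes, and that difference is exactly what accounts for the extra factor of $k$ in the bounds.

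For $\Sum{n}{}$, $\Prod{n}{}$ and $\Chc{n}{}$ I first apply the fourth derived equation to pad $\Bnd{n}{u} t''$ to $\Bnd{2^m}{u} q_0$, where $m = \logii(n)$ and $q_0 = \cond{c}{1 - \sign(u - \ul{n-1})}{t''}$ with $c \in \set{0,1,\dead}$ the unit of the operator. Written with $\ul{n-1}$ in its smallest representation, of size $\Theta(\logii(n)) = O(k)$, the body $q_0$ has size $O(k)$ and carries $4 + k''$ free occurrences of $u$, the four coming from the guard. I then apply the third derived equation $m-1$ times and compose the resulting substitutions, turning $\Bnd{2^m}{u}$ into $m$ nested binary binders whose net effect, after simplification by the ring axioms, is to replace each free occurrence of $u$ by the single decoding term $\sum_{i=1}^{m} \ul{2}^{\,i-1} \mul v_i$. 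With each coefficient $\ul{2}^{\,i-1}$ written as a smallest numeral, this decoding term has size $\Theta(m^2) = O(k^2)$, so $\tsize(t') = O(k) + (4 + k'') \mul O(k^2) + O(m) = O(k^3)$. Because the guard alone contributes at least four occurrences of $u$ irrespective of $t''$, choosing $t''$ of bounded size and $n = \Theta(2^k)$ (so that $m = \Theta(k)$) already forces $\tsize(t') = \Omega(k^2)$, the claimed unconditional lower bound.

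The operators $\Seq{n}{}$ and $\Par{n}{}$ are where the real work lies, and I expect this case to be the main obstacle: there is no unit for $\seqc$ or $\parc$, so the padding step is unavailable and $\Bnd{n}{u} t''$ cannot be lifted to a single power-of-two binder. Instead I would iterate the fourth derived equation for $\Seq{n}{}$ to split the binder along the binary expansion $n = \sum_i 2^{m_i}$ into a $\seqc$-composition of $r = O(\logii(n)) = O(k)$ blocks $\Seq{2^{m_i}}{u}(t'' \subst{\ul{o_i} + u}{u})$, where each offset $o_i$ is a partial sum of the preceding powers of two and hence a numeral of size $O(\logii(n)) = O(k)$. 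Each block is then expanded into binary binders as in the previous case. Every one of the $O(k)$ blocks now carries both an offset substitution of size $O(k)$ and, internally, a decoding substitution of size $O(k^2)$, each applied to the $k''$ free occurrences of $u$; multiplying the number of blocks by the per-block cost gives $\tsize(t') = O(k^4)$. Under the hypothesis that $u$ occurs free in $t''$ (so $k'' \geq 1$), the all-ones instance $n = 2^{m+1} - 1$ produces $r = \Theta(k)$ blocks of sizes $2^{m}, 2^{m-1}, \ldots$, whose decoding terms contribute $k'' \mul \sum_i \Theta(m_i^2) = \Theta(m^3) = \Theta(k^3)$, yielding the lower bound $\Omega(k^3)$.

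The routine part is the arithmetic of the four $O$/$\Omega$ estimates, which runs exactly parallel to the closed-form computation of $\tsize(t')$ in the proof of Corollary~\ref{corollary-elim}. The points I would check with care are three: that composing the substitutions produced by Proposition~\ref{prop-eqns-derived} leaves only binary binders behind, so that $t'$ is genuinely free of non-binary variable-binding operators; that every numeral introduced, namely $\ul{n-1}$, the coefficients $\ul{2}^{\,i-1}$ and the offsets $\ul{o_i}$, is measured in its smallest representation of logarithmic size, so that no exponential factor is smuggled in; and that the block decomposition forced on $\Seq{n}{}$ and $\Par{n}{}$ by the absence of a unit contributes precisely one extra factor of $k$, which is what separates the $O(k^3)/\Omega(k^2)$ bounds of the additive case from the $O(k^4)/\Omega(k^3)$ bounds of the sequential case.
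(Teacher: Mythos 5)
Your proposal is correct and follows essentially the same route as the paper's own proof: elimination via the equations of Proposition~\ref{prop-eqns-derived} (padding with a $\sign$-guarded conditional followed by repeated binary splitting for $\Sum{n}{}$, $\Prod{n}{}$, $\Chc{n}{}$; decomposition along the binary expansion of $n$ into $O(k)$ blocks, each then split, for $\Seq{n}{}$, $\Par{n}{}$), with the same size accounting in which each free occurrence of $u$ incurs a decoding term of size $\Theta(\logii(n)^2)$, which is exactly what produces the $O(k^3)$ and $O(k^4)$ upper bounds and the extra factor of $k$ in the sequential case. The only divergence is in the $\Omega$ bounds, which you obtain from extremal families ($t''$ of bounded size, $n$ near $2^k$, respectively $n = 2^{m+1}-1$) rather than, as the paper does, from a per-instance size formula claimed to be bounded below for every instance; since the paper's step ``$\logii(n) \geq k - 2$'' only holds when $\tsize(t'')$ is minimal (in general $\logii(n) = k - \tsize(t'') - 1$), your worst-case tightness reading is the defensible one.
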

\begin{proof}
Firstly, we consider the case where $t$ is a term of the form
$\Sum{n}{u} t''$, $\Prod{n}{u} t''$ or $\Chc{n}{u} t''$.
Let $k' = \tsize(t'')$,
let $k''$ be the number of times that $u$ occurs free in $t''$, and
let $l'_n$ be the size of the smallest term $p$ over the signature of
meadow enriched \ACP\ process algebras such that
$p = 1 - \sign(u - \ul{n})$.
Then
$\tsize(t') = 
 k' +
 \smash{\sum_{i = 0}^{\logii(n)}} (k'' \mul (6 \mul i)) + \linebreak[2]
 \logii(n) \mul (\logii(n) + 1) + 4 \mul l'_n + 6$.
Because $k = k' + \logii(n) + 1$, we know that $k' < k$ and 
$\logii(n) < k$.
Moreover, we know that $k'' < k'$ and $l'_n = \Theta(\logii(n + 1))$.
Hence $\tsize(t') = O(k^3)$.
We also know that
$k' \geq 1$ and, because $k = k' + \logii(n) + 1$,
$\logii(n) \geq k - 2$ if $t$ is of the form $\Sum{n}{u} t''$ or
$\Prod{n}{u} t''$; and that
$k' \geq 2$ and, because $k = k' + \logii(n) + 1$,
$\logii(n) \geq k - 3$ if $t$ is of the form $\Chc{n}{u} t''$.
Hence, $\tsize(t') = \Omega(k^2)$.

Secondly, we consider the case where $t$ is a term of the form
$\Seq{n}{u} t''$ or $\Par{n}{u} t''$. 
Let $k' = \tsize(t'')$, and
let $k''$ be the number of times that $u$ occurs free in $t''$.
Then
$\tsize(t') \leq
 \smash{\sum_{i = 0}^{\logii(n)}}
  (k' +
   \smash{\sum_{j = 0}^{\logii(i)}} (k'' \mul (6 \mul j)) + 
   \logii(i) \mul (\logii(i) + 1))$.
Because $k = k' + \logii(n) + 1$, we know that $k' < k$ and
$\logii(n) < k$.
Moreover, we know that $k'' < k'$.
Hence $\tsize(t') = O(k^4)$.
We also have that
$\tsize(t') \geq
 k' +
 \smash{\sum_{i = 0}^{\logii(n)}} (k'' \mul (6 \mul i)) +
 \logii(n) \mul (\logii(n) + 1)$.
Because $k = k' + \logii(n) + 1$ and $k' \geq 2$, we also know that
$\logii(n) \geq k - 3$.
Hence, in the case where $k'' \geq 1$, $\tsize(t') = \Omega(k^3)$.
\qed
\end{proof}

Corollaries~\ref{corollary-elim} and~\ref{corollary-elim-nonbin} show
that much of the compactness that can be achieved with the
variable-binding operators of the calculus associated with meadow
enriched \ACP\ process algebras can already be achieved with the
binary variable-binding operators.

In Corollary~\ref{corollary-elim-nonbin}, $\tsize(t')$ is $O(k^4)$
instead of $O(k^3)$ if $t$ is of the form $\Seq{n}{u} t''$ or
$\Par{n}{u} t''$.
The origin of this is that \ACP\ process algebras do not have identity
elements for sequential and parallel composition.
In the setting of \ACP, the identity element for sequential composition,
as well as parallel composition, is known as the empty process.

\section{Adding an Identity Element for Sequential Composition}
\label{sect-ACP-process-alg-id-seqc}

In this section, we investigate the effect of adding an identity element
for sequential composition to \ACP\ process algebras on the result
concerning elimination of non-binary variable-binding operators
presented above.

The signature of these algebras is the signature of \ACP\ process
algebras extended with the following:
\begin{itemize}
\item
the \emph{empty process} constant $\const{\ep}{\Proc}$;
\item
the unary \emph{termination} operator $\funct{\termi}{\Proc}{\Proc}$.
\end{itemize}

Let $P$ be a closed term of sort $\Proc$.
Intuitively, the additional constant and operator can be explained as
follows:
\begin{itemize}
\item
$\ep$ is only capable of terminating successfully;
\item
$\termi(P)$ is only capable of terminating successfully if $P$ is
capable of terminating successfully and is not capable of doing anything
otherwise.
\end{itemize}

In the setting of \ACP, the addition of the empty process constant has
been treated in several ways.
The treatment in~\cite{KV85a} yields a non-associative parallel
composition operator.
The first treatment that yields an associative parallel composition
operator~\cite{Vra97a} is from 1986, but was not published until 1997.
The treatment in this paper is based on~\cite{BG87c}.

An \emph{ACP process algebra with an identity element for sequential
composition} is an algebra with the signature of \ACP\ process algebras
with an identity element for sequential composition that satisfies the
formulas given in Table~\ref{eqns-ACP} with the exception of
$x \parc y = (x \leftm y \altc y \leftm x) \altc x \commm y$
and the formulas given in Table~\ref{eqns-ep}.%
\begin{table}[!t]
\caption{Replacing and additional axioms for empty process constant}
\label{eqns-ep}
\begin{eqntbl}
\begin{eqncol}
x \seqc \ep = x                                                 \\
\ep \seqc x = x                                                 \\
x \parc y =
((x \leftm y \altc y \leftm x) \altc x \commm y) \altc
\termi(x) \seqc \termi(y)                                       \\
x \leftm \ep = x                                                \\
\ep \leftm x = \dead                                            \\
\ep \commm x = \dead                                            \\
\encap{H}(\ep) = \ep
\end{eqncol}
\qquad\quad
\begin{eqncol}
\termi(\ep) = \ep                                               \\
\termi(a) = \dead                                               \\
\termi(x \altc y) = \termi(x) \altc \termi(y)                   \\
\termi(x \seqc y) = \termi(x) \seqc \termi(y)                   \\
\termi(x) \seqc \termi(y) = \termi(y) \seqc \termi(x)           \\
x \altc \termi(x) = x
\end{eqncol}
\end{eqntbl}
\end{table}

We could dispense with the equations $a \leftm x = a \seqc x$ and
$a \commm b \seqc x = (a \commm b) \seqc x$ from Table~\ref{eqns-ACP}
because they have become derivable from the other equations.
In spite of the replacement of the equation
$x \parc y = (x \leftm y \altc y \leftm x) \altc x \commm y$
by the equation
$x \parc y =
 ((x \leftm y \altc\nolinebreak y \leftm x) \altc x \commm y) \altc
 \termi(x) \seqc \termi(y)$,
the equations characterizing \ACP\ process algebras with an identity
element for sequential composition constitute a conservative extension
of the equations characterizing \ACP\ process algebras.
The equation $\termi(x) \seqc \termi(y) = \termi(y) \seqc \termi(x)$ is
of importance because it makes the equation
$(x \parc y) \parc z = x \parc (y \parc z)$ derivable.
The equation $x \altc \termi(x) = x$ is of importance because it makes
the equation $x \parc \ep = x$ derivable.

Meadow enriched \ACP\ process algebras with an identity element for
sequential composition are defined like meadow enriched \ACP\ process
algebras.
We can associate a calculus with meadow enriched \ACP\ process algebras
with an identity element for sequential composition like we did before
for meadow enriched \ACP\ process algebras.

By the addition of an identity element for sequential composition, the
properties of $\Seq{n}{}$ and $\Par{n}{}$ with respect to elimination of
non-binary variable-binding operators become comparable to the
properties of $\Sum{n}{}$, $\Prod{n}{}$ and $\Chc{n}{}$ with respect to
elimination of non-binary variable-binding operators.
\begin{proposition}
\label{prop-eqns-derived-ep}
From the axioms of the above-mentioned calculus, we can derive the
following equations  for each binding term $P$ of sort $\Proc$ and
$n,m \in \Natpos$:
\begin{ldispl}
\Seq{n+1}{u} P =
\Seq{2^m}{u} (\cond{\ep}{1 - \sign(u - \ul{n})}{P})
\hfill \mif n + 1 \leq 2^m\;,
\\[.75ex]
\Par{n+1}{u} P =
\Par{2^m}{u} (\cond{\ep}{1 - \sign(u - \ul{n})}{P})
\quad \mif n + 1 \leq 2^m\;.
\end{ldispl}%
\end{proposition}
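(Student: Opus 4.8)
The plan is to mirror the derivation of the fourth equation for $\Sum{n}{}$, $\Prod{n}{}$ and $\Chc{n}{}$ in Proposition~\ref{prop-eqns-derived}, exploiting that the empty process $\ep$ now supplies, for $\seqc$ and (derivably) for $\parc$, exactly the identity element whose absence forced the more complicated splitting equations for $\Seq{n}{}$ and $\Par{n}{}$ there. I will repeatedly use $\ep \seqc x = x \seqc \ep = x$ (Table~\ref{eqns-ep}) and the derivability of $x \parc \ep = \ep \parc x = x$ (the text notes the former follows from $x \altc \termi(x) = x$; the latter follows analogously by unfolding the defining equation for $\parc$ and using $\ep \leftm x = \dead$, $x \leftm \ep = x$, $\ep \commm x = \dead$, $\termi(\ep) = \ep$ and $x \altc \termi(x) = x$). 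From the signum axioms (Table~\ref{eqns-signum}) I will also use the derivable facts $\sign(\ul{k}) = 1$ and hence $\ul{k}/\ul{k} = 1$ for $k \geq 1$ (the latter because $\sign(u) \mul \sign(u) = u/u$), together with $\sign(0) = 0$, $0/0 = 0$ and $\sign(-\ul{k}) = -1$. These yield the two case evaluations I need for the process conditional: $\cond{\ep}{g}{Q} = \ep$ when the guard $g$ reduces to $0$, and $\cond{\ep}{g}{Q} = Q$ when $g$ reduces to a nonzero numeral. Since these identities are consequences of the signum axioms, the whole derivation goes through without assuming the cancellation axiom.

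First I would establish a padding lemma: for all $k \geq 1$ and all $j \geq 1$, the equation $\Seq{k}{u}(\cond{\ep}{1 - \sign(u + \ul{j})}{P}) = \ep$ is derivable, by induction on $k$. Unfolding once with the axiom $\Seq{k+1}{u} P = P \subst{0}{u} \seqc \Seq{k}{u}(P \subst{u+1}{u})$ from Table~\ref{eqns-compr-terms}, the leading factor is obtained by setting $u$ to $0$, whose guard $1 - \sign(\ul{j})$ equals $1 - 1 = 0$, so it is $\ep$; the substitution replacing $u$ by $u + 1$ turns the guard $1 - \sign(u + \ul{j})$ into $1 - \sign(u + \ul{j+1})$, a guard of the same shape, so the induction hypothesis (with $j + 1$) applies to the tail, and $\ep \seqc \ep = \ep$ collapses the product.

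Then I would prove, by induction on $n$, the auxiliary statement that for all $\ell \geq n + 1$ the equation $\Seq{n+1}{u} P = \Seq{\ell}{u}(\cond{\ep}{1 - \sign(u - \ul{n})}{P})$ is derivable; the proposition follows by instantiating $\ell := 2^m$, which is legitimate exactly when $n + 1 \leq 2^m$. In the base case $n = 0$, unfolding the right-hand side from the front yields leading factor $P \subst{0}{u}$, since the guard $1 - \sign(0)$ is the nonzero numeral $1$ and the else branch is selected, and a tail of the form treated by the padding lemma with $j = 1$, hence equal to $\ep$; so the right-hand side is $P \subst{0}{u} \seqc \ep = P \subst{0}{u} = \Seq{1}{u} P$. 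In the inductive step I unfold both $\Seq{n+2}{u} P$ and the right-hand side for $n + 1$ from the front: the leading factor on the right is again $P \subst{0}{u}$ (its guard is now $1 - \sign(-\ul{n+1}) = \ul{2}$, still nonzero), while in the tail the substitution replacing $u$ by $u + 1$ turns the guard $1 - \sign(u - \ul{n+1})$ into $1 - \sign(u - \ul{n})$ because $(u + 1) - \ul{n+1} = u - \ul{n}$ (using $\ul{n+1} = \ul{n} + 1$); the induction hypothesis, applied to the term obtained from $P$ by replacing $u$ by $u + 1$, then identifies the two tails.

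The statement for $\Par{n}{}$ is obtained by running the same two inductions with $\parc$ in place of $\seqc$ throughout, the only change being that the identity laws for $\parc$ are the derivable ones recorded above rather than axioms. I expect the one genuinely delicate point, and the step worth stating carefully, to be the generalized form of the padding lemma: fixing the guard as $1 - \sign(u + 1)$ would not yield an induction hypothesis that survives the substitution of $u + 1$ for $u$ arising on unfolding, so the shift parameter $j$ must be carried along. Everything else is the same bookkeeping as in Proposition~\ref{prop-eqns-derived}, now made uniform across all five variable-binding operators by the presence of $\ep$.
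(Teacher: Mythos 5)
Your proof is correct and takes essentially the same route as the paper's, whose proof of this proposition simply states that the derivations go analogously to those of the last equations for $\Sum{n}{}$, $\Prod{n}{}$ and $\Chc{n}{}$ in Proposition~\ref{prop-eqns-derived}: an auxiliary padding lemma showing the conditional collapses to the identity element ($\ep$ here, $0$, $1$, $\dead$ there) on the range beyond $\ul{n}$, followed by an induction on $n$ generalized over all bounds $\ell \geq n+1$, instantiated at $\ell = 2^m$. Your explicit shift parameter $j$ is precisely the strengthening needed for that padding induction to go through (the paper's auxiliary lemma, stated with the fixed guard $1 - \sign(u+1)$, does not survive the substitution of $u+1$ for $u$ produced by unfolding), so your write-up is a faithful, slightly more careful rendering of the paper's intended argument rather than a different one.
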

\begin{proof}
The proofs for these equations go analogously to the proofs for the last
equations for $\Sum{n}{}$, $\Prod{n}{}$ and $\Chc{n}{}$ in the proof of
Proposition~\ref{prop-eqns-derived}.
\qed
\end{proof}

Proposition~\ref{prop-eqns-derived-ep} gives rise to a corollary about
full elimination of the non-binary variable-binding operators for
sequential and parallel composition in the presence of an identity
element for sequential composition.
\begin{corollary}
\label{corollary-elim-nonbin-ep}
Let $t$ be a comprehended term of the form $\Seq{n}{u} t''$ or
$\Par{n}{u} t''$ without comprehended terms as proper subterms, and let
$k = \tsize(t)$.
Then there exists a binding term $t'$ without non-binary
variable-binding operators such that $t = t'$ is derivable from the
axioms of the above-mentioned calculus and $\tsize(t') = O(k^3)$ and
$\tsize(t') = \Omega(k^2)$.
\end{corollary}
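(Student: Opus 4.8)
The plan is to mirror the first case in the proof of Corollary~\ref{corollary-elim-nonbin} — the case where $t$ is of the form $\Sum{n}{u} t''$, $\Prod{n}{u} t''$ or $\Chc{n}{u} t''$ — since, in the presence of an identity element for sequential composition, the operators $\Seq{n}{}$ and $\Par{n}{}$ now admit the very same ``padding to a power of two'' equation that made that case go through. Concretely, $t$ is a comprehended term $\Bnd{n}{u} t''$ with $\Bnd{n}{}$ one of $\Seq{n}{}$, $\Par{n}{}$ and with $t''$ free of comprehended subterms. The first step is to choose the least $m$ with $n \leq 2^m$ and apply the relevant equation of Proposition~\ref{prop-eqns-derived-ep} to derive $t = \Bnd{2^m}{u} (\cond{\ep}{1 - \sign(u - \ul{n-1})}{t''})$. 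The point of this step is that the genuine identity element $\ep$ now plays the role that, for $\Sum{n}{}$, $\Prod{n}{}$ and $\Chc{n}{}$, was played by the identity elements $0$, $1$ and $\dead$, so that $\Seq{n}{}$ and $\Par{n}{}$ can at last be padded up to a power of two.

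The second step is to eliminate the single remaining non-binary binder $\Bnd{2^m}{}$ by iterating the binary-splitting equation $\Bnd{2^{j+1}}{u} P = \Bnd{2}{u} (\Bnd{2^{j}}{v} (P \subst{\ul{2} \mul v + u}{u}))$ of Proposition~\ref{prop-eqns-derived} exactly $m$ times. This produces a binding term $t'$ consisting of $m$ nested binary binders $\Bnd{2}{}$ wrapped around the padded body $\cond{\ep}{1 - \sign(u - \ul{n-1})}{t''}$, with each splitting step contributing one substitution of the form $\subst{\ul{2} \mul v + u}{u}$; by construction $t'$ contains no non-binary variable-binding operators and $t = t'$ is derivable. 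The third step is the size count, which I expect to be routine and structurally identical to the count in the first case of Corollary~\ref{corollary-elim-nonbin}. Writing $k' = \tsize(t'')$, writing $k''$ for the number of free occurrences of $u$ in $t''$, and writing $l'_n$ for the size of the smallest term equal to $1 - \sign(u - \ul{n-1})$, the accumulated cost of the $m = \logii(n)$ substitutions, the $m$ binder headers, and the $4 \mul l'_n$ contribution of the padded body gives $\tsize(t') = k' + \sum_{i=0}^{\logii(n)} (k'' \mul (6 \mul i)) + \logii(n) \mul (\logii(n)+1) + 4 \mul l'_n + O(1)$. Using $k = k' + \logii(n) + 1$, and hence $k' < k$, $\logii(n) < k$, $k'' < k'$ and $l'_n = \Theta(\logii(n+1))$, yields the upper bound $\tsize(t') = O(k^3)$; using $k' \geq 2$, and hence $\logii(n) \geq k - 3$, yields the lower bound $\tsize(t') = \Omega(k^2)$.

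The only genuinely new ingredient, and therefore the crux, is the availability of Proposition~\ref{prop-eqns-derived-ep}: without an identity element for sequential composition one cannot pad $\Seq{n}{}$ and $\Par{n}{}$ up to a power of two, which is precisely why Corollary~\ref{corollary-elim-nonbin} could only give the weaker bounds $O(k^4)$ and $\Omega(k^3)$ for these two operators. Once the padding equation is in hand, the remainder of the argument is a transcription of the already-established $\Sum{n}{}$, $\Prod{n}{}$ and $\Chc{n}{}$ analysis, so I anticipate no real obstacle beyond bookkeeping with the off-by-one in the index ($n$ versus $n+1$) between the statement proved here and the way Proposition~\ref{prop-eqns-derived-ep} is phrased.
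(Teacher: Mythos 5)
Your proposal is correct and follows exactly the route the paper intends: the paper's own proof consists of the single remark that the argument goes analogously to the $\Sum{n}{u} t''$, $\Prod{n}{u} t''$, $\Chc{n}{u} t''$ case of Corollary~\ref{corollary-elim-nonbin}, which is precisely what you carry out — padding to a power of two via Proposition~\ref{prop-eqns-derived-ep} with $\ep$ as the identity element, binary splitting via Proposition~\ref{prop-eqns-derived}, and then the same size count. You correctly identify Proposition~\ref{prop-eqns-derived-ep} as the one new ingredient that removes the obstruction responsible for the weaker $O(k^4)$/$\Omega(k^3)$ bounds in Corollary~\ref{corollary-elim-nonbin}.
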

\begin{proof}
The proof goes analogously to the case where $t$ is of the form
$\Sum{n}{u} t''$, $\Prod{n}{u} t''$ or $\Chc{n}{u} t''$ in the proof of
Corollary~\ref{corollary-elim-nonbin}.
\qed
\end{proof}

Corollaries~\ref{corollary-elim-nonbin} 
and~\ref{corollary-elim-nonbin-ep} imply that, on full elimination of 
the non-binary variable-binding operators for sequential and parallel 
composition, the addition of an identity element for sequential 
composition to \ACP\ process algebras gives rise to polynomially 
smaller terms.

\section{Adding Process Sequences}
\label{sect-ACP-process-alg-proc-vec}

In this section, we introduce process sequences to demonstrate that
there is an alternative to introducing variable-binding operators for
several associative operators on processes.

The signature of \ACP\ process algebras with an identity element for
sequential composition and process sequences is the signature of \ACP\
process algebras with an identity element for sequential composition
extended with the sort $\PS$ of \emph{process sequences} and the
following constants and operators:
\begin{itemize}
\item
the \emph{empty process sequence} constant
$\const{\emptyseq}{\PS}$;
\item
the unary \emph{singleton process sequence} operator
$\funct{\seq{\ph}}{\Proc}{\PS}$;
\item
the binary \emph{process sequence concatenation} operator
$\funct{\conc}{\PS \x \PS}{\PS}$;
\item
the unary \emph{generalized alternative composition} operator
$\funct{\sChc}{\PS}{\Proc}$;
\item
the unary \emph{generalized sequential composition} operator
$\funct{\sSeq}{\PS}{\Proc}$;
\item
the unary \emph{generalized parallel composition} operator
$\funct{\sPar}{\PS}{\Proc}$.
\end{itemize}
We assume that there is a countably infinite set $\cV$ of variables
of sort $\PS$, which contains $\alpha$, $\beta$ and $\gamma$, with and
without subscripts.
We use the same notational conventions as before.
In addition, we use infix notation for the binary operator $\conc$ and
mixfix notation for the unary operator $\seq{\ph}$.

The constant and the first two operators introduced above are the usual
ones for sequences, which gives an appropriate intuition about them.
The remaining three operators introduced above generalize alternative,
sequential and parallel composition to an arbitrary finite number of
processes.

An \emph{ACP process algebra with an identity element for sequential
composition and process sequences} is an algebra with the signature of
\ACP\ process algebras with an identity element for sequential
composition and process sequences that satisfies the formulas given in
Table~\ref{eqns-ACP} with the exception of
$x \parc y = (x \leftm y \altc y \leftm x) \altc x \commm y$
and the formulas given in Tables~\ref{eqns-ep} and~\ref{eqns-ps}.%
\begin{table}[!t]
\caption{Additional axioms for process sequences}
\label{eqns-ps}
\begin{eqntbl}
\begin{eqncol}
\alpha \conc \emptyseq = \alpha                                       \\
\emptyseq \conc \alpha = \alpha                                       \\
(\alpha \conc \beta) \conc \gamma = \alpha \conc (\beta \conc \gamma)
\eqnsep
\sChc(\emptyseq) = \dead                                              \\
\sChc(\seq{x}) = x                                                    \\
\sChc(\seq{x} \conc \alpha) = x \altc \sChc(\alpha)
\end{eqncol}
\qquad\quad
\begin{eqncol}
\sSeq(\emptyseq) = \epsilon                                           \\
\sSeq(\seq{x}) = x                                                    \\
\sSeq(\seq{x} \conc \alpha) = x \seqc \sSeq(\alpha)
\eqnsep
\sPar(\emptyseq) = \epsilon                                           \\
\sPar(\seq{x}) = x                                                    \\
\sPar(\seq{x} \conc \alpha) = x \parc \sPar(\alpha)
\end{eqncol}
\end{eqntbl}
\end{table}

If we would introduce process sequences in the absence of an identity
element for sequential composition, we should consider non-empty process
sequences only.

Meadow enriched \ACP\ process algebras with an identity element for
sequential composition and process sequences are defined like meadow
enriched \ACP\ process algebras.
We can associate a calculus with meadow enriched \ACP\ process algebras
with an identity element for sequential composition and process
sequences like we did before for meadow enriched \ACP\ process algebras.
Moreover, we can extend the resulting calculus with variable-binding
operators that generalize the process sequence concatenation operator.
For the terms of the extended calculus, we need the following additional
formation rule:
\begin{itemize}
\item
if $u \in \cU$ and $t \in \BT{\PS}$, then, for each $n \in \Natpos$,
$\Conc{n}{u} t \in \BT{\PS}$.
\end{itemize}

The axioms of the extended calculus are the formulas given in
Tables~\ref{eqns-meadow}--\ref{eqns-compr-terms}
and~\ref{eqns-ep}--\ref{eqns-compr-terms-ps}.%
\begin{table}[!t]
\caption{Additional axioms for comprehended terms of sort $\PS$}
\label{eqns-compr-terms-ps}
\begin{eqntbl}
\begin{eqncol}
\Conc{n}{u} S = \Conc{n}{v} (S \subst{v}{u})                          \\
\Conc{1}{u} S = S \subst{0}{u}                                        \\
\Conc{n+1}{u} S =
S \subst{0}{u} \conc \Conc{n}{u} (S \subst{u + 1}{u})
\end{eqncol}
\end{eqntbl}
\end{table}
Like some equations in Tables~\ref{eqns-ACP}--\ref{eqns-compr-terms},
the equations in Table~\ref{eqns-compr-terms-ps} are actually schemas of
equations: $S$ is a syntactic variable which stands for an arbitrary
binding term of sort $\PS$, and $n$ stands for an arbitrary positive
natural number.

The properties of $\Conc{n}{}$ with respect to elimination of non-binary
variable-binding operators are comparable to the properties of
$\Chc{n}{}$, $\Seq{n}{}$ and $\Par{n}{}$ with respect to elimination of
non-binary variable-binding operators.
\begin{proposition}
\label{prop-eqns-derived-ps}
From the axioms of the extended calculus, we can derive the following
equations for each binding term $S$ of sort $\PS$ and $n,m \in \Natpos$:
\begin{ldispl}
\Conc{1}{u} S = S \subst{0}{u}\;,
\\[.35ex]
\Conc{2}{u} S = S \subst{0}{u} \conc S \subst{1}{u}\;,
\\
\Conc{2^{n+1}}{u} S =
\Conc{2}{u}
 {\bigl(\Conc{2^{n}}{v} (S \subst{\ul{2} \mul v + u}{u})}\bigr)\;,
\\
\Conc{n+1}{u} S =
\Conc{2^m}{u} (\cond{\emptyseq}{1 - \sign(u - \ul{n})}{S})
\quad \mif n + 1 \leq 2^m\;.
\end{ldispl}%
\end{proposition}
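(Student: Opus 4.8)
The plan is to obtain the four derived equations by transcribing the proof of Proposition~\ref{prop-eqns-derived}, since the axioms for $\Conc{n}{}$ in Table~\ref{eqns-compr-terms-ps} have exactly the same shape as the axioms for the other variable-binding operators in Table~\ref{eqns-compr-terms}, and the concatenation operator $\conc$ is, by Table~\ref{eqns-ps}, associative with $\emptyseq$ as a two-sided identity. In this respect $\conc$ stands to $\Conc{n}{}$ as $\seqc$ stands to $\Seq{n}{}$ and $\parc$ to $\Par{n}{}$, with the additional feature, shared with $\altc$/$\Chc{n}{}$ (identity $\dead$), $+$/$\Sum{n}{}$ (identity $0$) and $\mul$/$\Prod{n}{}$ (identity $1$), that $\conc$ possesses the identity $\emptyseq$. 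The first equation is literally the second axiom of Table~\ref{eqns-compr-terms-ps}. The second follows by one application of the recursion axiom $\Conc{n+1}{u} S = S\subst{0}{u} \conc \Conc{n}{u}(S\subst{u + 1}{u})$ with $n = 1$, then the first equation, and the meadow identity $0 + 1 = 1$.

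For the third equation I would first prove, by induction on $n$, a splitting lemma expressing $\Conc{2 \mul n}{u} S$ as the concatenation of a $\Conc{n}{u}$ term collecting the even-indexed instances and a $\Conc{n}{u}$ term collecting the odd-indexed instances, using the recursion axiom to unfold and the associativity of $\conc$ to regroup; the third equation then follows by instantiating $n$ to $2^n$, unfolding the outer $\Conc{2}{u}$, and applying the $\alpha$-conversion axiom $\Conc{n}{u} S = \Conc{n}{v}(S\subst{v}{u})$ to rename the inner bound variable, exactly as in the $\Seq{n}{}$/$\Par{n}{}$ cases of Proposition~\ref{prop-eqns-derived}. For the fourth equation I would mimic the treatment of the fourth equation for $\Chc{n}{}$: pad the range up to the power of two $2^m$ by means of $\cond{\emptyseq}{1 - \sign(u - \ul{n})}{S}$, which for $u > n$ evaluates to the identity $\emptyseq$ and for $u \leq n$ to $S$, so that all padding instances disappear under $\conc$ and $\Conc{2^m}{u}(\cdots)$ collapses to $\Conc{n+1}{u} S$. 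This step presupposes that a conditional $\cond{\cdot}{\cdot}{\cdot}$ is available on sort $\PS$ — equivalently, a guarded command operator (and the corresponding alternative composition) on sort $\PS$ that respects $\emptyseq$ as identity — which I would add to the extended calculus with the evident axioms so that the $\Chc{n}{}$ induction can be reused verbatim.

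The step I expect to be the main obstacle is the third equation, precisely because $\conc$ is not commutative. The even/odd splitting that is harmless for $\Sum{n}{}$, $\Prod{n}{}$, $\Chc{n}{}$ and $\Par{n}{}$ reorders the instances, so for $\conc$ the regrouping has to be carried out in a way that preserves the left-to-right enumeration order; this is the same delicate point already confronted for the non-commutative $\Seq{n}{}$ in Proposition~\ref{prop-eqns-derived}, and it must be handled with the corresponding care here. A secondary obstacle is purely one of setup: I must make sure the conditional on sort $\PS$ used in the fourth equation is genuinely part of the extended calculus and satisfies the identity-respecting axioms on which the $\Chc{n}{}$-style padding argument relies, since without such a conditional the term $\cond{\emptyseq}{1 - \sign(u - \ul{n})}{S}$ does not even typecheck.
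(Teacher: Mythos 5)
Your overall plan coincides with the paper's own proof, which consists entirely of the remark that the argument ``goes analogously to the case of the equations for $\Sum{n}{}$ in the proof of Proposition~\ref{prop-eqns-derived}''; that is exactly the transcription you set out, and your handling of the first, second and fourth equations matches it (axiom; one unfolding of the recursion axiom plus $0 + 1 = 1$; padding the range up to $2^m$ with the identity $\emptyseq$). Your ``secondary obstacle'' is also a real one: the paper defines $\cond{\cdot}{\cdot}{\cdot}$ only on the sorts $\Quant$ and $\Proc$, so $\cond{\emptyseq}{1 - \sign(u - \ul{n})}{S}$ is not a term of the extended calculus as defined, and a conditional (or guarded command) on sort $\PS$ must indeed be added before the fourth equation can even be written down; the paper passes over this silently, and your proposed patch is the right kind. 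Note that the padding argument itself is unproblematic for the non-commutative $\conc$, because all the $\emptyseq$'s sit at the tail positions $n+1,\ldots,2^m-1$.

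The genuine gap is in the third equation, and it is worse than the ``delicate point'' you describe: the even/odd splitting lemma you propose to prove by induction is not derivable, because it is false. Writing $S_i$ for $S \subst{\ul{i}}{u}$, the axioms of Table~\ref{eqns-compr-terms-ps} give $\Conc{2 \mul n}{u} S = S_0 \conc S_1 \conc \cdots \conc S_{2n-1}$, whereas your split produces $(S_0 \conc S_2 \conc \cdots \conc S_{2n-2}) \conc (S_1 \conc S_3 \conc \cdots \conc S_{2n-1})$; since $\conc$ is not commutative these denote different process sequences in any model in which the $S_i$ are distinct singletons (in particular the intended one, where $\PS$ consists of genuine finite sequences), so no derivation can exist. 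Your reduction of the third equation to this lemma is in fact correct, which only makes matters worse: unfolding $\Conc{2}{u}$ in $\Conc{2}{u} \bigl(\Conc{2^{n}}{v} (S \subst{\ul{2} \mul v + u}{u})\bigr)$ yields precisely the bit-reversed concatenation (all even-indexed instances, then all odd-indexed ones), so the displayed third equation is itself underivable. Consequently your deferral to ``the corresponding care'' taken for $\Seq{n}{}$ in Proposition~\ref{prop-eqns-derived} has nothing to land on: the third equation for $\Seq{n}{}$ in Table~\ref{eqns-derived} has exactly the same defect, and the paper never confronts it there either. What can be salvaged is an order-preserving block split, $\Conc{n}{u} S = \Conc{m}{u} S \conc \Conc{n-m}{u} (S \subst{\ul{m} + u}{u})$ for $m < n$ (the analogue of the lemma the paper uses for the fourth $\Seq{n}{}$ equation, derivable by induction on $n$); with it one derives a corrected third equation such as $\Conc{2^{n+1}}{u} S = \Conc{2}{u} \bigl(\Conc{2^{n}}{v} (S \subst{\ul{2}^{n} \mul u + v}{u})\bigr)$, in which the outer binder carries the most significant digit, and the size bounds of Corollary~\ref{corollary-elim-nonbin-ps} survive unchanged.
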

\begin{proof}
The proof goes analogously to the case of the equations for $\Sum{n}{}$
in the proof of Proposition~\ref{prop-eqns-derived}.
\qed
\end{proof}

Proposition~\ref{prop-eqns-derived-ps} gives rise to a corollary about
full elimination of the non-binary variable-binding operators for
process sequence concatenation.
\begin{corollary}
\label{corollary-elim-nonbin-ps}
Let $t$ be a comprehended term of the form $\Conc{n}{u} t''$ without
comprehended terms as proper subterms, and let $k = \tsize(t)$.
Then there exists a binding term $t'$ without non-binary
variable-binding operators such that $t = t'$ is derivable from the
axioms of the extended calculus and $\tsize(t') = O(k^3)$ and
$\tsize(t') = \Omega(k^2)$.
\end{corollary}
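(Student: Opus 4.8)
The plan is to mirror, step for step, the treatment of $\Sum{n}{u} t''$, $\Prod{n}{u} t''$ and $\Chc{n}{u} t''$ in the proof of Corollary~\ref{corollary-elim-nonbin}, using Proposition~\ref{prop-eqns-derived-ps} in place of Proposition~\ref{prop-eqns-derived}. The four derived equations for $\Conc{n}{}$ have exactly the same shape as those for $\Sum{n}{}$: a base case $\Conc{1}{u} S = S \subst{0}{u}$, a binary case $\Conc{2}{u} S = S \subst{0}{u} \conc S \subst{1}{u}$, a halving step $\Conc{2^{n+1}}{u} S = \Conc{2}{u}(\Conc{2^{n}}{v}(S \subst{\ul{2} \mul v + u}{u}))$, and a padding step that replaces an arbitrary range by the next power of two at the cost of the guard $\cond{\emptyseq}{1 - \sign(u - \ul{n})}{S}$. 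Because $\emptyseq$ is a genuine identity for $\conc$ (Table~\ref{eqns-ps}), the padding step is available here, so the argument lands in the cubic case rather than the quartic one that $\Seq{n}{}$ and $\Par{n}{}$ fell into in the bare \ACP\ setting.

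First I would dispose of the degenerate cases $n = 1$ and $n = 2$ directly with the first two equations, which already leave only binary binders. For $n \geq 3$, I would let $m = \logii(n)$ be the least exponent with $n \leq 2^m$ and apply the padding equation once, rewriting $\Conc{n}{u} t''$ as $\Conc{2^m}{u}(\cond{\emptyseq}{1 - \sign(u - \ul{n-1})}{t''})$. Then I would unfold the halving step $m-1$ times, producing a nest of $m$ binary binders $\Conc{2}{}$ whose innermost body is the guarded copy of $t''$ subjected to the accumulated substitutions $\subst{\ul{2} \mul v + u}{u}$. Since only the binary operators $\Conc{2}{}$ survive, the resulting $t'$ contains no non-binary variable-binding operators, and $t = t'$ is derivable by construction.

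The size bookkeeping is the step that needs care, and it is identical to the $\Sum{n}{}$ computation. Writing $k' = \tsize(t'')$, $k''$ for the number of free occurrences of $u$ in $t''$, and $l'$ for the size of the smallest term over $\Sigmp$ equal to $1 - \sign(u - \ul{n-1})$, each of the $m$ halving steps substitutes a term of size $\Theta(i)$ at depth $i$ into the $k''$ free occurrences, so $\tsize(t') = k' + \sum_{i = 0}^{\logii(n)} (k'' \mul (6 \mul i)) + \logii(n) \mul (\logii(n) + 1) + 4 \mul l' + O(1)$, exactly as in Corollary~\ref{corollary-elim-nonbin}. From $k = k' + \logii(n) + 1$ one reads off $k' < k$, $\logii(n) < k$, $k'' < k'$, and $l' = \Theta(\logii(n))$, which give the upper bound $\tsize(t') = O(k^3)$. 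For the lower bound, the smallest body $t''$ of sort $\PS$ already forces $k' \geq 1$, so $k = k' + \logii(n) + 1$ yields $\logii(n) \geq k - 2$, and the binder-overhead term $\logii(n) \mul (\logii(n) + 1)$ alone gives $\tsize(t') = \Omega(k^2)$; notably this lower bound needs no hypothesis on the free occurrences of $u$, matching the unconditional $\Omega(k^2)$ in the statement. The only genuine obstacle is confirming that the compounded substitutions $\subst{\ul{2} \mul v + u}{u}$ enlarge the inner term only linearly in the nesting depth, which is precisely the estimate already validated for $\Sum{n}{}$, so no new difficulty arises.
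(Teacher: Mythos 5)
Your proposal is correct and follows exactly the route the paper takes: the paper's own proof simply says the argument goes analogously to the $\Sum{n}{u} t''$, $\Prod{n}{u} t''$, $\Chc{n}{u} t''$ case of Corollary~\ref{corollary-elim-nonbin}, with Proposition~\ref{prop-eqns-derived-ps} supplying the padding equation (via the identity $\emptyseq$ for $\conc$) and the halving equation, and your size bookkeeping and the unconditional $\Omega(k^2)$ lower bound from the binder-overhead term match that computation step for step.
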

\begin{proof}
The proof goes analogously to the case where $t$ is of the form
$\Sum{n}{u} t''$, $\Prod{n}{u} t''$ or $\Chc{n}{u} t''$ in the proof of
Corollary~\ref{corollary-elim-nonbin}.
\qed
\end{proof}

In the presence of the operators $\sChc$, $\sSeq$ and $\sPar$ and the
variable-binding operator $\Conc{n}{}$, the variable-binding operators
$\Chc{n}{}$, $\Seq{n}{}$, and $\Par{n}{}$ are superfluous.
\begin{proposition}
\label{prop-alt-binders}
From the axioms of the extended calculus, we can derive the following
equations for each binding term $P$ of sort $\Proc$ and $n \in \Natpos$:
\begin{ldispl}
\Chc{n}{u} P = \sChc \bigl(\Conc{n}{u} \seq{P}\bigr)\;, \qquad
\Seq{n}{u} P = \sSeq \bigl(\Conc{n}{u} \seq{P}\bigr)\;, \qquad
\Par{n}{u} P = \sPar \bigl(\Conc{n}{u} \seq{P}\bigr)\;.
\end{ldispl}%
\end{proposition}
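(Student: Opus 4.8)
The plan is to prove all three equations simultaneously by induction on $n$, exploiting that the three cases are completely parallel: they differ only in replacing the pair $(\altc, \sChc)$ by $(\seqc, \sSeq)$ or by $(\parc, \sPar)$. So I would carry out the argument in full for $\Chc{n}{}$ and then remark that the cases for $\Seq{n}{}$ and $\Par{n}{}$ go through verbatim, reading off the respective axioms from Table~\ref{eqns-ps}. Since the inductive step will need the hypothesis for a shifted instance of $P$, I would phrase the induction so that the statement to be established is ``for every binding term $P$ of sort $\Proc$'', keeping $P$ universally quantified at each stage.

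First I would record the one auxiliary observation the argument rests on: substitution commutes with the singleton-sequence operator, that is, $(\seq{P}) \subst{t}{u} \equiv \seq{P \subst{t}{u}}$ for every binding term $t$ of sort $\Quant$. This is immediate from the clause for $o(t_1,\ldots,t_n)$ in the definition of substitution, taking $o$ to be $\seq{\ph}$. I would also note that the process-sequence axioms of Table~\ref{eqns-ps} and the axioms for comprehended terms of sort $\PS$ in Table~\ref{eqns-compr-terms-ps} hold for arbitrary terms in place of the variables $x$ and $\alpha$, by instantiation.

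For the base case $n = 1$, the axiom $\Conc{1}{u} \seq{P} = (\seq{P}) \subst{0}{u}$ together with the auxiliary observation gives $\Conc{1}{u} \seq{P} = \seq{P \subst{0}{u}}$, whence $\sChc(\seq{P \subst{0}{u}}) = P \subst{0}{u}$ by $\sChc(\seq{x}) = x$; on the other side $\Chc{1}{u} P = P \subst{0}{u}$, so the two sides coincide. For the inductive step I would compute, using the axiom for $\Conc{n+1}{}$ and the auxiliary observation, that $\Conc{n+1}{u} \seq{P} = \seq{P \subst{0}{u}} \conc \Conc{n}{u} \seq{P \subst{u + 1}{u}}$, apply the axiom $\sChc(\seq{x} \conc \alpha) = x \altc \sChc(\alpha)$ to obtain $P \subst{0}{u} \altc \sChc\bigl(\Conc{n}{u} \seq{P \subst{u + 1}{u}}\bigr)$, and then invoke the induction hypothesis at $P \subst{u + 1}{u}$ to rewrite the second summand as $\Chc{n}{u} (P \subst{u + 1}{u})$. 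The result is precisely the right-hand side of the axiom defining $\Chc{n+1}{u} P$, which closes the induction.

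I expect no genuine obstacle here: the whole argument is a bookkeeping exercise matching the recursive axiom for $\Conc{n}{}$ against the recursive axioms for $\Chc{n}{}$, $\Seq{n}{}$ and $\Par{n}{}$ through the homomorphism-like clauses $\sChc(\seq{x} \conc \alpha) = x \altc \sChc(\alpha)$, $\sSeq(\seq{x} \conc \alpha) = x \seqc \sSeq(\alpha)$ and $\sPar(\seq{x} \conc \alpha) = x \parc \sPar(\alpha)$. The only points demanding a little care are keeping $P$ general so that the induction hypothesis applies to the shifted term $P \subst{u + 1}{u}$, and the routine verification that substitution passes through $\seq{\ph}$.
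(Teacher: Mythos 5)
Your proof is correct and follows exactly the route the paper intends: the paper's own proof is simply ``easy to prove by induction on $n$'', and your elaboration --- induction on $n$ with $P$ kept universally quantified, matching the recursive axiom for $\Conc{n}{}$ against those for $\Chc{n}{}$, $\Seq{n}{}$, $\Par{n}{}$ via the clauses $\sChc(\seq{x} \conc \alpha) = x \altc \sChc(\alpha)$ etc., plus the observation that substitution commutes with $\seq{\ph}$ --- is precisely that induction carried out in detail. No gaps; nothing further is needed.
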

\begin{proof}
This is easy to prove by induction on $n$.
\qed
\end{proof}

If we would introduce quantity sequences as well, we could get a similar
result for the variable-binding operators $\Sum{n}{}$ and $\Prod{n}{}$.

Proposition~\ref{prop-alt-binders} shows that there is an alternative to 
introducing variable-binding operators for alternative, sequential and 
parallel composition.
However, this proposition also gives rise to a corollary about full 
elimination of the non-binary variable-binding operators for 
alternative, sequential and parallel composition.
\begin{corollary}
\label{corollary-elim-nonbin-alt}
Let $t$ be a comprehended term of the form $\Chc{n}{u} t''$, 
$\Seq{n}{u} t''$ or $\Par{n}{u} t''$ without comprehended terms as proper 
subterms, and let $k = \tsize(t)$.
Then there exists a binding term $t'$ without non-binary variable-binding 
operators such that $t = t'$ is derivable from the axioms of the extended 
calculus and $\tsize(t') = O(k^3)$ and $\tsize(t') = \Omega(k^2)$.
\end{corollary}
\begin{proof}
This is a direct consequence of Corollary~\ref{corollary-elim-nonbin-ps}
and Proposition~\ref{prop-alt-binders}.
\qed
\end{proof}

Corollary~\ref{corollary-elim-nonbin-alt} implies that in the presence 
of an identity element for sequential composition, on full elimination 
of the non-binary variable-binding operators for alternative, sequential 
and parallel composition, the addition of process sequences to \ACP\ 
process algebras does not give rise to significantly smaller or larger 
terms.

\section{Concluding Remarks}
\label{sect-conclusions}

We have introduced the notion of an \ACP\ process algebra.
The set of equations that have been taken to characterize \ACP\ process
algebras is a revision of the axiom system \ACP.
We consider this revision worth mentioning of itself, if only because
it removes the need to have a constant for each atomic action.
We have also introduced the notion of a meadow enriched \ACP\ process
algebra.
This notion is a simple generalization of the notion of an \ACP\ process
algebra to processes in which data are involved, the mathematical
structure of data being a meadow.
The primary mathematical structure for calculations is unquestionably a
field, and a meadow differs from a field only in that the multiplicative
inverse operation is made total by imposing that the multiplicative
inverse of zero is zero.
Therefore, we consider the combination of \ACP\ process algebras and
meadows made in this paper, a combination with potentially many 
applications.

For all associative operators from the signature of meadow enriched
\ACP\ process algebras that are not of an auxiliary nature, we have 
introduced variable-binding operators as generalizations.
Thus, we have obtained a process calculus whose terms can be interpreted
in all meadow enriched \ACP\ process algebras.
We have shown that the use of variable-binding operators that bind
variables with a two-valued range can already have a major impact on the
size of terms, and that the impact can be further increased if we add an
identity element for sequential composition to meadow enriched \ACP\
process algebras.
In addition, we have demonstrated that there is an alternative to 
introducing variable-binding operators for several associative operators
on processes if we add a sort of process sequences and suitable 
operators on process sequences to meadow enriched \ACP\ process 
algebras.

All variable-binding operators of the calculus associated with meadow
enriched \ACP\ process algebras can be eliminated from all terms of the
calculus by means of its axioms, and all terms of the calculus can be
directly interpreted in meadow enriched \ACP\ process algebras.
Therefore, although they yield a calculus, we consider these
variable-binding operators to constitute a process algebraic feature.
Fitting them in an algebraic framework does not involve any serious
theoretical complication.

Different from the variable-binding operators introduced in this paper,
the variable-binding operators from $\mu$CRL and PSF that generalize
associative operators of \ACP\ do not give rise to finitary comprehended
terms.
It is much more difficult to fit the variable-binding operators from
those formalisms in an algebraic framework, see e.g.~\cite{Lut02a}.
This also holds for the integration operator, which is found in
extensions of the axiom system \ACP\ concerning timed processes to allow
for the alternative composition of a continuum of differently timed
processes to be expressed (see e.g.~\cite{BM02a}).
It is worth mentioning that in effective $\mu$CRL, a restriction of 
$\mu$CRL for which a simulator is feasible (see e.g.~\cite{GP95a}), the 
variable bound by the variable binding operator that generalizes 
alternative composition must have a finite range.

We have also attempted to fit variable-binding operators that bind
variables with an infinite range in an algebraic framework.
We have looked at binding algebras~\cite{Sun99a}, which are second-order
algebras of a specific kind that covers variable-binding operators.
The problem is that the theory of binding algebras is insufficiently
elaborate for our purpose.
For example, it is not known whether the important characterization
results from the theory of first-order algebras, i.e.\ Birkhoff's
variety result and Malcev's quasi-variety result (see
e.g.~\cite{BS81a,MT92a}), have generalizations for binding algebras.

It is known that many important results from the theory of first-order
algebras, including the above-mentioned ones, have generalizations for
higher-order algebras as considered in the theory of general
higher-order algebras developed in~\cite{Mei92a,KM96a,Mei03a}.
Therefore, we have also considered the replacement of variable-binding
operators by higher-order operators that give rise to such higher-order
algebras.
However, owing to the absence of bound variables, additional
higher-order operators are needed which serve the same purpose as the
combinators of combinatory logic~\cite{HS86a}.
Thus, this leads to the line taken earlier with combinatory process
algebra~\cite{BBP94b}.

\bibliographystyle{splncs03}
\bibliography{PA}

\end{document}